\documentclass[11pt,a4paper,reqno]{amsart}%
\usepackage{amsthm,amsmath,amsfonts,amssymb,amsxtra,appendix,bookmark,dsfont}

\theoremstyle{plain}
\newtheorem{theorem}{Theorem}
\newtheorem{lemma}[theorem]{Lemma}

\theoremstyle{definition}

\theoremstyle{remark}
\newtheorem{remark}{Remark}



\DeclareMathOperator{\Tr}{Tr}



\def\leqslant{\le}
\def\bq{\begin{eqnarray}}
\def\eq{\end{eqnarray}}
\def\eqq{\end{align*}}
\def\bqq{\begin{align*}}
\def\nn{\nonumber}

\def\eps{\varepsilon}

\renewcommand{\epsilon}{\varepsilon}
\newcommand\1{{\ensuremath {\mathds 1} }}


\def\R {\mathbb{R}}
\def\C {\mathbb{C}}
\def\N {\mathcal{N}}

\def\E {\mathcal{E}}

\def\F {\mathcal{F}}

\def\H{\gH}

\def\R {\mathbb{R}}
\def\C {\mathbb{C}}
\def\N {\mathcal{N}}

\def\E {\mathcal{E}}

\def\d{{\rm d}}
\newcommand{\gH}{\mathfrak{H}}

\newcommand{\bH}{\mathbb{H}}
\newcommand{\dGamma}{{\ensuremath{\rm d}\Gamma}}
\title[Collective excitations of Bose gases]{Collective excitations of Bose gases in the mean-field regime}

\author[P.~T. Nam]{Phan Th\`anh Nam}
\address{IST Austria, Am Campus 1, 3400 Klosterneuburg, Austria} 
\email{pnam@ist.ac.at}

\author[R. Seiringer]{Robert Seiringer}
\address{IST Austria, Am Campus 1, 3400 Klosterneuburg, Austria} 
\email{robert.seiringer@ist.ac.at}

\begin{document}
\date{\today}

\begin{abstract} We study the spectrum of a large system of $N$ identical bosons interacting via a two-body potential with strength $1/N$. In this mean-field regime, Bogoliubov's theory predicts that the spectrum of the $N$-particle Hamiltonian can be approximated by that of an effective quadratic Hamiltonian acting on Fock space, which describes the fluctuations around a condensed state. Recently, Bogoliubov's theory has been justified rigorously in the case that the low-energy eigenvectors of the $N$-particle Hamiltonian display complete condensation in the unique minimizer of the corresponding Hartree functional. In this paper, we shall justify Bogoliubov's theory for the high-energy part of the spectrum of the $N$-particle Hamiltonian corresponding to (non-linear) excited states of the Hartree functional. Moreover, we shall  extend the existing results on the excitation spectrum to the case of non-uniqueness and/or degeneracy of the Hartree minimizer. In particular, the latter covers the case of rotating Bose gases, when the rotation speed is large enough to break the symmetry and to produce multiple quantized vortices in the Hartree minimizer.
\end{abstract}

\maketitle

\setcounter{tocdepth}{2}
\tableofcontents

\section{Introduction}

We consider a system of $N$ identical bosons moving in an open subset $\Omega \subseteq \R^d$, described by the Hamiltonian 
\[
H_N= \sum\limits_{i = 1}^N T_{i} + \frac{1}{N-1} \sum\limits_{1 \leqslant i < j \leqslant N} {w(x_i-x_j)}
\]
which acts on the Hilbert space $\H^N=\bigotimes_{\text{sym}}^N L^2(\Omega)$ of permutation-symmet\-ric square-integrable functions. Here $T$ is the kinetic energy operator of each particle, and $T_i$ denotes the $N$-body operator $\1\otimes \cdots \otimes 1 \otimes T \otimes 1 \cdots \otimes 1$ where $T$ acts only on the $i$-th variable. The multiplication operator $w(.-.)$, with $w:\R^d \to \R$ an even, measurable function, corresponds to the interactions between particles. We consider the mean-field regime where the strength of the interaction is proportional to the inverse of the number of particles. 

We shall always assume that $T\ge 1$ and
\bq \label{eq:bound-w-T}
(w(.-.))^2 \le C_0 ( \1 \otimes T + T \otimes \1) \quad {\rm on}~\gH^2
\eq
for some constant $C_0>0$. For a physically relevant example, the reader may think of a system in $\Omega=\R^3$ with the kinetic energy operator $T=-\Delta+V(x)$ where $V_-:=\max\{-V,0\}$ is sufficiently regular ($|V_-|^2 \le -C\Delta$ is sufficient), and the Coulomb/Newton interaction $w(x-y)= \pm |x-y|^{-1}$. Under assumption (\ref{eq:bound-w-T}), it is easy to see that the interaction part in $H_N$ is infinitesimally small with respect to the kinetic energy part, and $H_N$ is a self-adjoint operator on $\bigotimes_{\text{sym}}^N D(T)$ by the Kato--Rellich Theorem \cite[Theorem X.13]{ReeSim2}. We are interested in the spectrum of $H_N$ when $N\to \infty$.

A crucial property of bosons is that a macroscopic fraction of particles can occupy a common quantum state. In Hartree's theory, one assumes that all particles live in a condensate state described by a normalized vector $u\in \gH = L^2(\Omega)$. In this case, the energy per particle is given by the {\em Hartree functional} 
$$\E_{\rm H}(u)=\frac{\langle u^{\otimes N}, H_N u^{\otimes N}\rangle}{N}=\langle u, T u \rangle+\frac12 \iint |u(x)|^2 w(x-y)|u(y)|^2\,\d x\,\d y,$$
which is well-defined on $D(T^{1/2})$, the quadratic form domain of $T$. In many situations (see e.g. \cite{FanSpoVer-80,PetRagVer-89,RagWer-89,LewNamRou-13}), Hartree's theory determines exactly the leading order of the ground state energy of $H_N$, namely $$\lim_{N\to \infty}\frac{\inf \sigma(H_N)}{N} = \inf \{ \E_{\rm H}(u) ~|~ \|u\|=1 \}=:e_{\rm H}\,.$$
 
Going beyond Hartree's theory, Bogoliubov's theory  \cite{Bogoliubov-47b} predicts the next order of the low-energy spectrum of weakly interacting Bose gases. In the mean-field regime, Bogoliubov's theory has been justified for the excitation spectrum of $H_N$ in the recent works \cite{Seiringer-11,GreSei-13,LewNamSerSol-13,DerNap-13}. It was shown that the low-energy eigenvalues of  $H_N-Ne_{\rm H}$ converge to those of the {\em Bogoliubov Hamiltonian}, an effective quadratic Hamiltonian on Fock space which is obtained by quantizing the Hessian of the Hartree functional at its minimizer.

In the present paper, we will justify Bogoliubov's theory for the {\em collective excitations of $H_N$}. Let $u_0$ be a stationary state of the Hartree functional; i.e., $u_0$ is a (normalized) solution to the Hartree equation
$$ (T + |u_0|^2*w -\mu_0) u_0 =0 $$
for some real constant $\mu_0$. If $u_0$ is a Hartree minimizer, then obviously it satisfies the above Hartree equation. However, in general $u_0$ is not necessarily a Hartree minimizer and $N \E_{\rm H}(u_0)$ may be very far from the ground state energy of $H_N$ (their distance is of order $N$). We will investigate the connection between the  eigenvalues of order $1$ of $H_N - N \E_{\rm H}(u_0)$ and those of the corresponding Bogoliubov Hamiltonian, which is the second quantization of the Hessian of the Hartree functional at $u_0$. A result of this kind was known only in the translation
invariant case \cite{Seiringer-11}, where collective excitations can be obtained from the low-energy spectrum using a Galileo transformation. In general, there is no such symmetry and our result is new. 

Our approach can also be used to extend the existing results on the low-energy excitation spectrum in \cite{Seiringer-11,GreSei-13,LewNamSerSol-13,DerNap-13} to the case of non-uniqueness and/or degeneracy (in the sense of absence of a non-zero lower bound to the Hessian of the Hartree functional) of the Hartree minimizer. 
 In particular, we shall consider the case of rotating Bose gases, where the system is rotation invariant with respect to a fixed axis. In this case the Hartree functional has infinitely many degenerate minimizers if the rotation speed is large enough to break the symmetry and to produce multiple quantized vortices in the Hartree minimizer.

\section{Main results}

In this section, we set up some notation and state our main results. All Hilbert spaces we consider are complex Hilbert spaces and their inner products are conjugate linear in the first variable and linear in the second. We always denote by $C$ a positive constant that depends only on the constant $C_0$ in assumption (\ref{eq:bound-w-T}) and the value $\langle u_0, T u_0 \rangle$, where $u_0$ is the relevant Hartree stationary state (two $C$'s in one line may refer to two different constants). 

To discuss Bogoliubov's theory, it is convenient to enlarge the $N$-particle space $\gH^N$ to the Fock space
$$ \F= \bigoplus_{m=0}^\infty \gH^m= \C \oplus \gH \oplus \gH^2 \oplus \cdots$$     
For every $f\in \gH$ one can define the annihilation operator $a(f)$ and the creation operator $a^\dagger(f)$, which are operators on $\F$ acting as
\begin{align*} (a(f) \Psi )(x_1,\dots,x_{m-1}) &= \sqrt{m} \int \overline{f(x_m)}\Psi(x_1,\dots,x_m) \d x_m \\
(a^\dagger (f) \Psi )(x_1,\dots,x_{m+1})&= \frac{1}{\sqrt{m+1}} \sum_{j=1}^{m+1} f(x_j)\Psi(x_1,\dots,x_{j-1},x_{j+1},\dots, x_{m+1})
\end{align*}
for every $\Psi\in \gH^m$. It is straightforward to see that $a^\dagger(f)$ is the adjoint of $a(f)$ and that they satisfy the canonical commutation relations (CCR)
$$ [a(f),a(g)]=[a^\dagger(f),a^\dagger(g)]=0,\quad [a(f), a^\dagger (g)]= \langle f, g \rangle$$
for all $f,g\in \gH$. 

We can extend the kinetic energy operator $\sum_{i=1}^N T_i$ on $\gH^N$ to an operator on Fock space 
$$ \dGamma( T) := 0 \oplus \bigoplus_{m=1}^\infty \left( \sum_{i=1}^m T_i \right).$$
If $\{u_n\}_{n=0}^\infty$ is an orthonormal basis for $\gH$, we can rewrite 
$$ \dGamma(T) = \sum_{m,n \ge 0} \langle u_m,  T u_n \rangle a_m^\dagger a_n$$
where we have denoted $a_n:=a(u_n)$ for short. In particular, the sum on the right side is independent of the choice of the basis.  Similarly, 
$$
0 \oplus 0 \oplus \bigoplus_{m=2}^\infty \left( \sum_{1\le i<j \le m} w(x_i-x_j) \right) = \frac{1}{2}\sum_{m,n,p,q\ge 0} W_{m,n,p,q} a_m^\dagger a_n^\dagger a_p a_q 
$$
where $W_{m,n,p,q}:= \langle u_m \otimes u_n, w\, u_p\otimes u_q \rangle$. Thus we can extend $H_N$ to an operator on Fock space
\bq \label{eq:2nd-quantization-HN}
\sum_{m,n \ge 0} \langle u_m,  T u_n \rangle a_m^\dagger a_n + \frac{1}{2(N-1)}\sum_{m,n,p,q\ge 0} W_{m,n,p,q} a_m^\dagger a_n^\dagger a_p a_q .
 \eq

In the following, we shall always choose $u_0\in D(T^{1/2})$ to be a stationary state of the Hartree functional, namely $u_0$ is a (normalized) solution to the Hartree equation
\bq \label{eq:Hartree-equation}
(T + |u_0|^2*w -\mu_0) u_0 =0 
\eq
for some real constant $\mu_0$ (which necessarily equals $\mu_0=\langle u_0, (T+|u_0|^2*w) u_0 \rangle$). Here $|u_0|^2*w$ is the convolution between the functions $|u_0|^2$ and $w:\R^d \to \R$. Let $P:=|u_0 \rangle \langle u_0|$ be the orthogonal projection onto $u_0$ and let $Q:=1-P$. Since the operator
$$ h:= T + |u_0|^2*w -\mu_0$$
leaves the subspace $\gH_+= Q \gH$ invariant, we shall often use the same notation for the restricted operator on $\gH_+$. 

Heuristically, Bogoliubov's theory \cite{Bogoliubov-47b} consists of the following approximation procedure. First, assuming most of the particles are in the condensate state $u_0$, we ignore all terms in (\ref{eq:2nd-quantization-HN}) which are higher than quadratic in $a_n$ and $a^\dagger_n$ with $n\ne 0$. Second, we replace $a_0$ and $a^\dagger_0$ by a scalar number $\sqrt{N_0}$ \footnote{Strictly speaking, the term $a_0^\dagger a_0^\dagger a_0 a_0 = a_0^\dagger a_0(a_0^\dagger a_0-1)$ is replaced by $N_0(N_0-1)$ instead of $N_0^2$.}, where $N_0$ is interpreted as the number of particles living in the condensate state. Finally, using $N_0\approx N$ and the Hartree equation (\ref{eq:Hartree-equation}), we formally arrive at
\bq \label{eq:Bogoliubov-approximation-0}
H_N - N \E_{\rm H}(u_0) \approx \bH\,,
\eq
where 
\bq \label{eq:Bogoliubov-Hamiltonian}
\bH= \dGamma( h+K_1) + \frac{1}{2}\sum_{m,n\ge 1} \Big( \langle u_m \otimes u_n, K_2 \rangle a_m^\dagger a_n^\dagger + \langle K_2,  u_m\otimes u_n \rangle a_m a_n \Big)\,.
\eq
Here $K_2\in \H^2$ is given by 
$$
K_2(x,y) := u_0(x) w(x-y) u_0(y)\,,
$$
and $K_1 = Q k_1 Q$, where $k_1$ is the operator defined via its integral kernel as 
$$
k_1(x,y):= u_0(x) w(x-y) \overline{u_0(y)} \,.
$$
The fact that $K_2\in \gH^2$ follows from (\ref{eq:bound-w-T}) and $u_0\in D(T^{1/2})$. Consequently, $k_1$  defines a Hilbert-Schmidt operator on $\gH$.

Since $h+K_1$ leaves $\gH_+$ invariant, the Bogoliubov Hamiltonian $\bH$ can be viewed as an operator on the excited Fock space
\bq \label{eq:excited-Fock-space}
\F_+= \C \oplus \gH_+ \oplus \gH_+^2 \oplus \cdots
\eq
On the other hand, $H_N$ is an operator on $\gH^N$. Therefore, the formal approximation (\ref{eq:Bogoliubov-approximation-0}) must be understood via an appropriate unitary transformation. Following \cite[Prop. 14]{LewNamSerSol-13}, we shall use the unitary mapping 
\begin{align}\label{eq:def-UN}
U_N : \gH^N   \quad &\to \quad \F_+^{\le N} := \C \oplus \gH_+ \oplus \cdots \oplus \gH_+^N \nn\\
\Psi \quad & \mapsto \quad \bigoplus_{j=0}^N Q^{\otimes j} \left( \frac{a_0^{N-j}}{\sqrt{(N-j)!}} \Psi \right),
\end{align}
which satisfies
\begin{align} \label{eq:alt-def-UN*}
U_N^\dagger : \F_+^{\le N} \quad &\to \quad \gH^N \nn\\
\bigoplus_{j=0}^N \phi_j \quad &\mapsto \quad \sum _{j = 0}^N\frac{{{{(a_0^*)}^{N - j}}}}{{\sqrt {(N - j)!} }}{\phi _j}
\end{align}
and, for all $m,n\ge 1$,
\begin{align*}
U_N \, a_0^\dagger a_0 \,U_N^\dagger &= N- \N_+ ,\\
U_N\, a^\dagger_m a_0 \,U_N^\dagger &= a^\dagger _m \sqrt{N-\N_+},\\
U_N\, a^\dagger_m a_n \,U_N^\dagger &= a^\dagger_m a_n
\end{align*}
on $\F_+^{\le N}$, where $\N_+=\dGamma(Q)$ is the particle number operator on $\F_+$. The unitary operator $U_N$ provides a tool to rigorously implement the c-number substitution in Bogoliubov's heuristic approximation. In fact, the approximate identity (\ref{eq:Bogoliubov-approximation-0}) should be understood as
\bq \label{eq:Bogoliubov-approximation-1}
U_N H_N U_N^\dagger -N \E_{\rm H}(u_0) \approx \bH .
\eq

We shall call a solution to the Hartree equation (\ref{eq:Hartree-equation}) {\em non-degenerate} if the Hessian of the Hartree functional $\E_{\rm H}(u)$ at $u_0$  is bounded from below by a strictly positive constant, i.e., if 
\bq \label{eq:non-degeneracy} 
\langle v, (h+K_1)v \rangle + {\rm Re} \iint \overline{v(x)v(y)} K_2(x,y) \d x \d y \ge \eta \|v\|^2 
\eq
for all $v\in \gH_+$ and for some constant $\eta >0$ independent of $v$.  As explained in \cite[Appendix A]{LewNamSerSol-13}, this non-degeneracy condition implies that the Bogoliubov Hamiltonian $\bH$ satisfies
\begin{align}\label{eq:non-degeneracy-bH}
\bH \ge \eta' \N_+ -C
\end{align}
for some $\eta'>0$ and $C>0$. 
In this case, $\bH$ is bounded from below and it can be properly defined as a self-adjoint operator on $\F_+$ using Friedrichs' method (see \cite[Theorem 1]{LewNamSerSol-13}).  

When $u_0$ is a non-degenerate Hartree minimizer, the approximation (\ref{eq:Bogoliubov-approximation-1}) was justified in \cite{LewNamSerSol-13} at the level of quadratic forms. In fact, the quadratic form estimates in \cite{LewNamSerSol-13} are sufficient to analyze the low-energy spectrum of $H_N$, thanks to the min-max principle. On the other hand, since we are interested in the highly excited part of the  spectrum of $H_N$ (when $u_0$ is not a Hartree minimizer) and the degenerate case (when $u_0$ is a degenerate Hartree minimizer), the following operator estimate will be more useful. 

\begin{theorem}[Operator bound]  \label{thm:operator-bound} Assume that \eqref{eq:bound-w-T} and \eqref{eq:Hartree-equation} hold and let $\bH$ be defined in (\ref{eq:Bogoliubov-Hamiltonian}). Then both $U_N H_N U_N^\dagger$ and $\1_{\F_+^{\le N}}\bH \1_{\F_+^{\le N}}$ can be extended to self-adjoint operators on $\F_+^{\le N}$ with the same domain 
$$D\Big(\1_{\F_+^{\le N}}\dGamma(QTQ)\1_{\F_+^{\le N}}\Big)\,,$$
and one has the operator inequality
$$ \Big( U_N H_N U_N^\dagger -N \E_{\rm H}(u_0) -\1_{\F_+^{\le N}}\bH \1_{\F_+^{\le N}} \Big)^2 \le \frac{C}{N} \Big(\dGamma(QTQ)\N_+^2 +1\Big).$$
\end{theorem}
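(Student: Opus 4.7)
The plan is to compute $U_N H_N U_N^\dagger$ explicitly from the second-quantized form \eqref{eq:2nd-quantization-HN} via the conjugation identities given for $U_N$, then to identify the terms whose sum is $N\E_{\rm H}(u_0)+\bH$ and to estimate each remaining piece. I fix an orthonormal basis $\{u_n\}_{n\ge 0}$ of $\gH$ starting with the Hartree stationary state $u_0$, so that every sum in \eqref{eq:2nd-quantization-HN} splits naturally according to how many indices equal $0$. After normal-ordering each summand so that the factors $a_0,a_0^\dagger$ are grouped, the three given identities immediately yield an explicit expression on $\F_+^{\le N}$.

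This decomposes $U_N H_N U_N^\dagger$ into five pieces indexed by the number $k\in\{0,1,2,3,4\}$ of factors $a_m,a_m^\dagger$ with $m\ge 1$. The $k=0$ piece combines $(N-\N_+)\langle u_0,Tu_0\rangle$ with $\tfrac{(N-\N_+)(N-\N_+-1)}{2(N-1)}W_{0,0,0,0}$; using $\mu_0=\langle u_0,Tu_0\rangle+W_{0,0,0,0}$ and expanding in $\N_+$, this equals $N\E_{\rm H}(u_0)-\mu_0\N_+$ plus a remainder of size $\N_+^2/N$. The $k=2$ piece, after the Hartree equation \eqref{eq:Hartree-equation} is used to write $Q(T+|u_0|^2*w)Q=QhQ+\mu_0 Q$ on $\gH_+$, reproduces exactly the quadratic expression in \eqref{eq:Bogoliubov-Hamiltonian}, up to prefactors $(N-\N_+)/(N-1)$ and $\sqrt{N-\N_+}\sqrt{N-\N_+-1}/(N-1)$ in the diagonal and pairing parts respectively (each equal to $1-O(\N_+/N)$), plus a diagonal contribution $+\mu_0\N_+$ which cancels the corresponding one from the $k=0$ piece. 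The $k=1$ piece takes the form $\sqrt{N-\N_+}\bigl(\sum_{m\ge 1}\langle u_0,(T+|u_0|^2*w)u_m\rangle a_m^\dagger+{\rm h.c.}\bigr)$ plus an operator-valued $O(\N_+/N)$ remainder of the same shape, and the leading term vanishes identically by the Hartree equation.

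The error $E:=U_NH_NU_N^\dagger-N\E_{\rm H}(u_0)-\1_{\F_+^{\le N}}\bH\1_{\F_+^{\le N}}$ is therefore a sum $E=\sum_{i=1}^M E_i$ of finitely many explicit self-adjoint pieces, and by the operator Cauchy--Schwarz inequality $E^2\le M\sum_i E_i^2$ it suffices to bound each $E_i^2\le\tfrac{C}{N}(\dGamma(QTQ)\N_+^2+1)$. For the scalar, linear, and quadratic-prefactor remainders the key pointwise-on-$\F_+^{\le N}$ operator inequalities are $(\sqrt{N-\N_+}-\sqrt{N-1})^2\le C(\N_+^2+1)/N$ and $\bigl(\tfrac{(N-\N_+)(N-\N_+-1)}{N-1}-N+2\N_+\bigr)^2\le C(\N_+^4+1)/N^2$, which, combined with uniform Hilbert--Schmidt bounds for the one-body operators $k_1$ and $|u_0|^2*w$ and for the two-body kernel $K_2$ (all following from $u_0\in D(T^{1/2})$, $\|u_0\|=1$, and assumption \eqref{eq:bound-w-T}), give the required estimates after using $\N_+\le N$. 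For the genuinely cubic and quartic pieces $A_3,A_4$, I translate \eqref{eq:bound-w-T} to the second-quantized level: writing $A_4=\tfrac{1}{2(N-1)}\sum_{m,n,p,q\ge 1}W_{m,n,p,q}a_m^\dagger a_n^\dagger a_p a_q$, a Cauchy--Schwarz argument using $(w(.-.))^2\le C_0(T\otimes\1+\1\otimes T)$ on $\gH^2$ yields $\|A_4\Psi\|^2\le\tfrac{C}{N^2}\langle\Psi,\dGamma(QTQ)\N_+^3\Psi\rangle$; since $\N_+^3\le N\N_+^2$ on $\F_+^{\le N}$ this is dominated by $\tfrac{C}{N}\langle\Psi,\dGamma(QTQ)\N_+^2\Psi\rangle$, and $A_3$ is treated analogously, its prefactor $\sqrt{N-\N_+}/(N-1)$ supplying the missing $\sqrt{N}$.

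For the self-adjointness statement, the operator inequality makes $U_N H_N U_N^\dagger-\1_{\F_+^{\le N}}\bH\1_{\F_+^{\le N}}$ relatively $\dGamma(QTQ)$-bounded with relative bound zero; meanwhile, $\1_{\F_+^{\le N}}\bH\1_{\F_+^{\le N}}$ itself is relatively $\1_{\F_+^{\le N}}\dGamma(QTQ)\1_{\F_+^{\le N}}$-bounded with bound strictly less than $1$ (by the construction in \cite[Theorem 1]{LewNamSerSol-13} adapted to the truncated space), so the Kato--Rellich theorem yields self-adjointness of both operators on the common domain $D\bigl(\1_{\F_+^{\le N}}\dGamma(QTQ)\1_{\F_+^{\le N}}\bigr)$. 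The main obstacle I foresee is producing the operator-valued bounds $\|A_k\Psi\|^2\le\cdots$ for the cubic and quartic pieces, as opposed to the expectation-value bounds $|\langle\Psi,A_k\Psi\rangle|\le\cdots$ used in the quadratic-form analysis of \cite{LewNamSerSol-13}; the appearance of $\N_+^2$ (rather than $\N_+$) in the final estimate is precisely the price of passing from $\langle\Psi,A\Psi\rangle$ to $\langle\Psi,A^\dagger A\Psi\rangle$, where the composition of two quartic operators in the excited-mode creation/annihilation operators produces an additional power of $\N_+$.
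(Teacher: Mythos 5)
Your proposal is essentially the paper's own proof: the same decomposition of $U_N H_N U_N^\dagger - N\E_{\rm H}(u_0) - \1_{\F_+^{\le N}}\bH\1_{\F_+^{\le N}}$ into finitely many explicit remainder terms, the same squaring via Cauchy--Schwarz, the same key operator bounds (quartic and cubic pieces controlled by $N^{-2}\dGamma(QTQ)\N_+^3$ and $N^{-1}\dGamma(QTQ)\N_+^2$ via the two-body inequality \eqref{eq:bound-w-T}), and Kato--Rellich on $\F_+^{\le N}$, where $\N_+$ is bounded, for the self-adjointness statement. Two details to repair when writing it out: $|u_0|^2*w$ is in general neither bounded nor Hilbert--Schmidt (only $(|u_0|^2*w)^2\le |u_0|^2*w^2\le CT$ follows from \eqref{eq:bound-w-T}, so the term $\dGamma\big(Q(|u_0|^2*w)Q+K_1\big)\frac{\N_+-1}{N-1}$ must be bounded by $CN^{-2}\dGamma(QTQ)(\N_++1)^3$ rather than through a norm bound on the one-body operator), and the asserted estimate $\|A_4\Psi\|^2\le CN^{-2}\langle\Psi,\dGamma(QTQ)\N_+^3\Psi\rangle$ is obtained in the paper by conjugating back with $U_N$ and splitting $\sum_{i<j}\sum_{k<\ell}w'_{ij}w'_{k\ell}$ according to how many indices coincide, inserting projections $Q_k$ to harvest the powers of $\N_+$ --- the combinatorial step your sketch compresses into ``a Cauchy--Schwarz argument.''
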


\begin{remark}
Since the square root is operator monotone, the operator bound in Theorem \ref{thm:operator-bound} implies the quadratic form bound in \cite[Prop. 5.1]{LewNamSerSol-13}. On the other hand, note that our assumption \eqref{eq:bound-w-T} on $w^2$ is stronger than the corresponding assumption on $w$ in \cite{LewNamSerSol-13}.
\end{remark}

Theorem \ref{thm:operator-bound} will be proved in Section \ref{sec:operator-bound}. From Theorem \ref{thm:operator-bound} and a localization argument, we obtain

\begin{theorem}[Bogoliubov to N-body excitations]\label{thm:collective-excitation} Assume that \eqref{eq:bound-w-T} and \eqref{eq:Hartree-equation} hold and let $\bH$ be defined in (\ref{eq:Bogoliubov-Hamiltonian}). Assume that there exist $m\in \mathbb{N}$, $\lambda\in \mathbb{R}$ and orthonormal vectors $\{\Phi_j\}_{j=1}^m \subset \F_+$ satisfying 
$$\Phi_j \in \bigcap_{N\ge 1} D\Big(\1_{\F_+^{\le N}}\dGamma(QTQ)\1_{\F_+^{\le N}}\Big)$$ and 
$$(\bH-\lambda)\Phi_j=0$$
in the sense that $\1_{\F_+^{\le N}}(\bH-\lambda)\1_{\F_+^{\le N+2}}\Phi_j =0$ for all $N\in \mathbb{N}$. Let
$$\delta:= m  \max\big\{ 1, \lambda, \max_{1\le j \le m} \langle \Phi_j, \N_+ \Phi_j \rangle \big\}.$$
Then for every $N\ge 3\delta$, there exists an $\epsilon>0$ satisfying  
$$ \eps \le C \max \big\{ \delta^{1/2}N^{-1/6},\delta^{3/2}N^{-1/2} \big\}$$
such that the interval $(\lambda-\eps,\lambda+ \eps)$ contains (at least) $m$ eigenvalues (counting multiplicity), or 
one element of the essential spectrum, of $H_N-N\E_{\rm H}(u_0)$.
\end{theorem}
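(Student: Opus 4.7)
The strategy is to construct $m$ orthonormal quasi-modes in $\gH^N$ from the Bogoliubov eigenvectors $\Phi_j$ and conclude via a spectral-projection argument. Pick a smooth cut-off $\chi\in C^\infty(\R)$ with $\chi\equiv 1$ on $(-\infty,1/2]$ and $\chi\equiv 0$ on $[1,\infty)$, and a parameter $M\in[C\delta,N]$ to be optimised later (feasible because $N\ge 3\delta$). Set $\Phi_j^{(M)}:=\chi(\N_+/M)\Phi_j\in\F_+^{\le M}\subset\F_+^{\le N}$. The Markov estimate $\|\Phi_j-\Phi_j^{(M)}\|^2\le C\delta/(mM)$ keeps the Gram matrix of $\{\Phi_j^{(M)}\}_{j=1}^m$ within $C\delta/M$ of the identity in operator norm, so once $M\ge C\delta$ Gram--Schmidt produces orthonormal $\tilde\Phi_j^{(M)}\in\F_+^{\le M}$ close to the $\Phi_j^{(M)}$. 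By unitarity of $U_N$, the vectors $\tilde\Psi_j:=U_N^\dagger\tilde\Phi_j^{(M)}\in\gH^N$ are then orthonormal.

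To estimate $\epsilon_0:=\max_j\|(H_N-N\E_{\rm H}(u_0)-\lambda)\tilde\Psi_j\|=\max_j\|(U_NH_NU_N^\dagger-N\E_{\rm H}(u_0)-\lambda)\tilde\Phi_j^{(M)}\|$, split it as $E_1+E_2$ with $E_1:=(U_NH_NU_N^\dagger-N\E_{\rm H}(u_0)-\1_{\F_+^{\le N}}\bH\1_{\F_+^{\le N}})\tilde\Phi_j^{(M)}$ and $E_2:=(\1_{\F_+^{\le N}}\bH\1_{\F_+^{\le N}}-\lambda)\tilde\Phi_j^{(M)}$. Theorem~\ref{thm:operator-bound}, combined with $\N_+\le M$ on $\supp\tilde\Phi_j^{(M)}$, yields $\|E_1\|^2\le(CM^2/N)\langle\tilde\Phi_j^{(M)},\dGamma(QTQ)\tilde\Phi_j^{(M)}\rangle+C/N$. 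For $E_2$, the component-wise eigenvalue equation $\bH\Phi_j=\lambda\Phi_j$ gives $(\1_{\F_+^{\le N}}\bH\1_{\F_+^{\le N}}-\lambda)\Phi_j^{(M)}=\1_{\F_+^{\le N}}[\bH,\chi(\N_+/M)]\Phi_j$; only the $a^\dagger a^\dagger$ and $aa$ parts of $\bH$ contribute, the commutator is $O(M^{-1})$-small pointwise and supported where $\N_+\sim M$, and using the tail bound on the high-$\N_+$ modes of $\Phi_j$ yields $\|E_2\|^2\le C\delta/(mM)$.

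The crux is the uniform kinetic estimate $\langle\tilde\Phi_j^{(M)},\dGamma(QTQ)\tilde\Phi_j^{(M)}\rangle\le C\delta/m$, which is the main obstacle because we cannot appeal to positivity of $\bH$ or to the non-degeneracy condition (\ref{eq:non-degeneracy}). To obtain it, pair the identity $\bH\Phi_j^{(M)}=\lambda\Phi_j^{(M)}+[\bH,\chi(\N_+/M)]\Phi_j$ with $\Phi_j^{(M)}$, decompose $\bH=\dGamma(T)+\dGamma(|u_0|^2\ast w-\mu_0+K_1)+(\text{off-diag.})$, and absorb the non-kinetic pieces via (\ref{eq:bound-w-T}), the Hilbert--Schmidt character of $k_1$ and $K_2$, and a Cauchy--Schwarz bootstrap of the form $|\langle\Psi,\dGamma(|u_0|^2\ast w)\Psi\rangle|\le\tfrac12\langle\Psi,\dGamma(T)\Psi\rangle+C\langle\Psi,\N_+\Psi\rangle$. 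Together with $|\lambda|\le\delta/m$, this yields the bound.

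Putting everything together, $m\,\epsilon_0^2\le C\delta(M^2/N+1/M)$. Optimising $M\in[C\delta,N]$: for $\delta\le N^{1/3}$ the choice $M\sim N^{1/3}$ gives $\sqrt m\,\epsilon_0\le C\delta^{1/2}N^{-1/6}$; for $\delta\ge N^{1/3}$ the lower constraint forces $M\sim\delta$ and yields $\sqrt m\,\epsilon_0\le C\delta^{3/2}N^{-1/2}$. Both cases are bounded by a constant times the $\max$ in the theorem. Setting $\epsilon$ slightly larger than $\sqrt m\,\epsilon_0$ and applying the standard quasi-mode inclusion lemma---if $m$ orthonormal vectors satisfy $\|(A-\lambda)\tilde\Psi_j\|<\epsilon/\sqrt m$, then $\mathrm{rank}\,\1_{|A-\lambda|<\epsilon}(A)\ge m$---to $A=H_N-N\E_{\rm H}(u_0)$ finishes the proof: the interval $(\lambda-\epsilon,\lambda+\epsilon)$ contains at least $m$ eigenvalues of $A$ counted with multiplicity, or an element of its essential spectrum.
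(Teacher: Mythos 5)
Your proposal is correct and follows essentially the same route as the paper: localization of each $\Phi_j$ by a cutoff $f(\N_+/M)$, the operator bound of Theorem \ref{thm:operator-bound} plus a commutator estimate for the pairing term $\mathbb{K}_{\rm cr}+\mathbb{K}_{\rm cr}^\dagger$, control of $\dGamma(QTQ)$ through the eigenvalue equation (the paper's Lemma \ref{lem:localization}), and the choice $M=\max\{C\delta,N^{1/3}\}$ followed by a rank/spectral-projection counting argument. The only cosmetic difference is that you Gram--Schmidt the localized vectors before invoking the quasi-mode lemma, whereas the paper keeps the nearly orthonormal family and absorbs the overlap error directly into the factor $(1-2\delta/M)$ in \eqref{eq:col-semi-final}; to preserve your stated bounds you should use the summed criterion $\sum_j\|(A-\lambda)\tilde\Psi_j\|^2<\eps^2$ (with symmetric orthonormalization) rather than the per-vector one, since individual residuals of the orthonormalized combinations may pick up factors of $m$.
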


\begin{remark} A result similar to Theorem \ref{thm:collective-excitation} when $\lambda$ is in the essential spectrum of $\bH$ also holds true. In this case, the eigenvalue equations $(\bH-\lambda)\Phi_j=0$ have to be replaced by the approximate ones. To be precise, we have to replace each $\Phi_j$ by a sequence $\{\Phi_{j,\ell}\}_{\ell=1}^\infty$ and assume that 
$$ \langle \Phi_{j,\ell}, \Phi_{j',\ell} \rangle =\delta_{jj'}, \quad \lim_{\ell\to 0} \sup_{k\in \mathbb N} \| \1_{\F_+^{\le k}}(\bH-\lambda)\1_{\F_+^{\le k+2}}\Phi_{j,\ell} \|=0.$$
\end{remark}
Theorem \ref{thm:collective-excitation} will be proved in Section \ref{sec:collective-excitation}. It quantifies the extent to which the approximate Bogoliubov Hamiltonian $\bH$ predicts the part of the spectrum of  $H_N$ near the energy $N\E_{\rm H}(u_0)$. 
Obviously, the result in Theorem \ref{thm:collective-excitation} is only interesting if 
$$ \max_{1\le j \le m} \langle \Phi_j, \N_+ \Phi_j \rangle <\infty.$$ 
Such states $\Phi_j$ describe essentially pure condensates (with condensate wave function $u_0$) with only finitely many particles outside the condensate.

Roughly speaking, Theorem \ref{thm:collective-excitation} provides a mapping from the spectrum of $\bH$ to the spectrum of $H_N$. The reverse direction of this relation is more complicated. Note that, in general, one cannot expect that an eigenvalue of $H_N$ close to $N \E_{\rm H}(u_0)$ is related to the spectrum of the Bogoliubov Hamiltonian $\bH$ at all, unless the corresponding eigenfunction displays almost complete condensation in the state $u_0$. The following result gives a mapping from those eigenvalues of $H_N$, whose eigenvectors satisfy a certain condensation assumption, to the spectrum of $\bH$. 

\begin{theorem}[N-body to Bogoliubov excitations] \label{thm:many-body-collective-excitation} Assume that \eqref{eq:bound-w-T} and \eqref{eq:Hartree-equation} hold and let $\bH$ be defined in (\ref{eq:Bogoliubov-Hamiltonian}). Assume that for every $N$ large enough, there exists a wave function $\Psi_N\in \gH^N$ such that
$$ (H_N - N\E_{\rm H}(u_0)-\lambda_N)\Psi_N=0$$
and
$$ \langle \Psi_N, \N_+ \Psi_N \rangle + | \lambda_N| =o(N^{1/3}).$$
Then there exist normalized vectors $\Phi_N' \in \F_+^{\le N^{1/3}}$ such that
$$ \| (\bH-\lambda_N)\Phi_N' \|^2 \le C\frac{C\lambda_N^2 }{N^{2/3}} + \frac{C}{N^{1/3}}\langle \Phi_N, (\N_++1) \Phi_N \rangle. $$
In particular, $\| (\bH-\lambda_N)\Phi_N' \|\to 0$ as $N\to \infty$. Consequently, if $\lambda_N\to \lambda$, then $\lambda\in \sigma(\bH).$
\end{theorem}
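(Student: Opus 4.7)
The plan is to lift $\Psi_N$ to the excited Fock space and then truncate in particle number. First, I set $\Phi_N := U_N \Psi_N \in \F_+^{\le N}$, so the eigenvalue equation becomes $(U_N H_N U_N^\dagger - N\E_{\rm H}(u_0) - \lambda_N)\Phi_N = 0$, and since $U_N(N-a_0^\dagger a_0)U_N^\dagger = \N_+$ the hypothesis reads $\langle \Phi_N, \N_+\Phi_N\rangle = o(N^{1/3})$. With $M := \lfloor N^{1/3}\rfloor$, I define $\Phi_N^M := \1_{\F_+^{\le M}}\Phi_N$ and $\Phi_N' := \Phi_N^M/\|\Phi_N^M\|$; Markov's inequality gives $\|\Phi_N - \Phi_N^M\|^2 \le M^{-1}\langle \Phi_N, \N_+ \Phi_N\rangle \to 0$, so $\|\Phi_N^M\|\to 1$ and the normalization is harmless. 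Testing the eigenvalue identity against $\Phi_N$ and using a standard Bogoliubov-type lower bound $\1_{\F_+^{\le N}}\bH\1_{\F_+^{\le N}} \ge c\,\dGamma(QTQ) - C(\N_+ + 1)$ together with Theorem \ref{thm:operator-bound} supplies the a priori kinetic estimate $\langle \Phi_N, \dGamma(QTQ)\Phi_N\rangle \le C(|\lambda_N| + \langle \Phi_N, \N_+\Phi_N\rangle + 1)$.

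Because $\bH$ shifts $\N_+$ by at most two and $M+2 \le N$ for $N$ large, we have $\bH\Phi_N^M = \1_{\F_+^{\le N}}\bH\1_{\F_+^{\le N}}\Phi_N^M$. I would therefore write
\begin{equation*}
(\bH - \lambda_N)\Phi_N^M \;=\; R\, \Phi_N^M \;+\; (U_N H_N U_N^\dagger - N\E_{\rm H}(u_0) - \lambda_N)\Phi_N^M,
\end{equation*}
with $R := \1_{\F_+^{\le N}}\bH\1_{\F_+^{\le N}} - (U_N H_N U_N^\dagger - N\E_{\rm H}(u_0))$. Theorem \ref{thm:operator-bound} applied on $\F_+^{\le M}$, where $\N_+^2 \le M^2$ and $\dGamma(QTQ)$ preserves particle number, bounds the first term by
\begin{equation*}
\|R\Phi_N^M\|^2 \;\le\; \frac{CM^2}{N}\bigl(\langle \Phi_N, \dGamma(QTQ)\Phi_N\rangle + 1\bigr) \;\le\; \frac{CM^2}{N}\bigl(|\lambda_N| + \langle \Phi_N, \N_+\Phi_N\rangle + 1\bigr).
\end{equation*}
For the remaining term, the eigenvalue equation yields
\begin{equation*}
(U_N H_N U_N^\dagger - N\E_{\rm H}(u_0) - \lambda_N)\Phi_N^M \;=\; -(U_N H_N U_N^\dagger - N\E_{\rm H}(u_0) - \lambda_N)\1_{>M}\Phi_N,
\end{equation*}
so that only the sectors $\{M-1,\ldots,M+2\}$ of $\Phi_N$ contribute after reprojection to $\F_+^{\le M+2}$, using that $U_N H_N U_N^\dagger$ too shifts $\N_+$ by at most two (from the explicit form in (\ref{eq:2nd-quantization-HN}) and the conjugation rules for $U_N$). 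Combining the Hilbert--Schmidt bounds on the pairing coefficients with Markov's $\|\1_{\N_+\ge M-1}\Phi_N\|^2 \le CM^{-1}\langle \Phi_N, \N_+\Phi_N\rangle$ and the a priori kinetic estimate controls this boundary residue by $C(|\lambda_N|^2 + \langle\Phi_N, (\N_++1)\Phi_N\rangle)/M$. The choice $M = \lfloor N^{1/3}\rfloor$ balances $M^2/N$ against $1/M$ and produces the claimed bound; Weyl's criterion applied to $\{\Phi_N'\}$ then gives $\lambda \in \sigma(\bH)$ whenever $\lambda_N \to \lambda$.

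The hardest step is the boundary-residue analysis: the pairing operators in both $\bH$ and $U_N H_N U_N^\dagger$ fail to commute with the sharp cutoff $\1_{\F_+^{\le M}}$, so one has to combine the Hilbert--Schmidt structure of $K_2$ and of $w$ afforded by (\ref{eq:bound-w-T}), the Markov tail-decay of $\Phi_N$ past sector $M$, and the a priori kinetic bound in order to bring all boundary contributions down to $O(1/M)$ times the natural invariants of $\Phi_N$. The exponent $1/3$ arises precisely as the balance point of the operator-bound error $O(M^2/N)$ and the boundary loss $O(1/M)$.
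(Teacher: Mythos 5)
Your overall architecture mirrors the paper's: lift $\Psi_N$ to Fock space, cut off the particle number at $M\sim N^{1/3}$, apply Theorem~\ref{thm:operator-bound} to replace $U_NH_NU_N^\dagger$ by $\bH$ on the cutoff state, and control the cutoff error. The a priori kinetic estimate you invoke (via a lower bound of the form $\1_{\F_+^{\le N}}\bH\1_{\F_+^{\le N}}\ge c\,\dGamma(QTQ)-C(\N_++1)$, which does hold here without any non-degeneracy assumption) suffices for the $\|R\Phi_N^M\|$ term.

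The gap is in the boundary analysis, and it is fatal to the \emph{sharp} cutoff $\1_{\F_+^{\le M}}$. Write the boundary residue as the commutator $\big[\bH_N,\1_{\F_+^{\le M}}\big]\Phi_N$, where $\bH_N:=U_NH_NU_N^\dagger-N\E_{\rm H}(u_0)$. The number-preserving parts of $\bH_N$ commute with $\1_{\F_+^{\le M}}$ and drop out, so the commutator is built from the number-shifting pieces. Take the pair-creation part $\mathbb K_{\rm cr}$ alone: $\big[\mathbb K_{\rm cr},\1_{\F_+^{\le M}}\big]=\mathbb K_{\rm cr}\,\1_{\{M-1,M\}}$, a two-sector shell. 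Using $\mathbb K_{\rm cr}^\dagger\mathbb K_{\rm cr}\le C(\N_++1)^2$ (the Hilbert--Schmidt bound on $K_2$), on that shell $(\N_++1)^2\sim M^2$, so
$\big\|\mathbb K_{\rm cr}\1_{\{M-1,M\}}\Phi_N\big\|^2\le CM^2\,\big\|\1_{\{M-1,M\}}\Phi_N\big\|^2$.
Markov gives $\big\|\1_{\{M-1,M\}}\Phi_N\big\|^2\le\langle\Phi_N,\N_+\Phi_N\rangle/(M-1)$, which yields a bound of order $M\langle\Phi_N,\N_+\Phi_N\rangle$ — \emph{growing} with $M$, not decaying like the $O(1/M)$ you claim. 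Even an averaging/pigeonhole choice of $M$ in a window of length $\sim N^{1/3}$ only improves the shell mass to $O(\langle\Phi_N,\N_+\Phi_N\rangle/M^2)$, giving a boundary residue of order $\langle\Phi_N,\N_+\Phi_N\rangle=o(N^{1/3})$, still not $o(1)$. The invocation of ``the a priori kinetic estimate'' doesn't rescue this: it controls $\|\dGamma(QTQ)\Phi_N\|^2\lesssim\lambda_N^2+N\langle\Phi_N,\N_+\Phi_N\rangle$, which via $T\ge1$ gives $\langle\Phi_N,\N_+^2\Phi_N\rangle\lesssim\lambda_N^2+N\langle\Phi_N,\N_+\Phi_N\rangle$, and the resulting Chebyshev shell bound is no better than what averaging already gave.

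The paper's device that closes this gap is a \emph{smooth} Lipschitz cutoff $f(\N_+/M)$ with $f=1$ on $[0,1/2]$, $f=0$ on $[1,\infty)$. Then $f(\N_+/M)-f((\N_+\pm r)/M)$ is $O(\|f'\|_\infty/M)$ in operator norm and supported on $\{\N_+\lesssim M\}$; combined with $(\N_++1)^2\1_{\F_+^{\le M+2}}\le(M+3)(\N_++1)$, the commutator estimate of Lemma~\ref{le:N-body-commutator} delivers $\big\|\big[\bH_N,f(\N_+/M)\big]\Phi_N\big\|^2\lesssim\langle\Phi_N,(\N_++1)\Phi_N\rangle/M$ (plus the $N^{-1}\dGamma(QTQ)\N_+^2$ term absorbed by the kinetic a priori bound). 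The factor of $1/M^2$ from $\|f'\|_\infty^2/M^2$ is exactly the $M^2$ that a sharp cutoff loses on the boundary shell. So your proof does not go through as written; replacing the sharp truncation by a smooth number-localization and bounding the corresponding commutator (including the $R_2,R_3,R_5$ number-shifting pieces of $\bH_N$, not just $\mathbb K_{\rm cr}$) would repair it and reduces it to the paper's argument.
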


Theorem \ref{thm:many-body-collective-excitation} will be proved in Section \ref{sec:many-body-collective-excitation}. The condition $\langle \Psi_N, \N_+ \Psi_N \rangle \ll N^{1/3}$ in Theorem \ref{thm:many-body-collective-excitation} is a technical assumption in our approach (similarly, in Theorem \ref{thm:collective-excitation} we also need $\delta \ll N^{1/3}$ to ensure $\eps \to 0$). It is presumably far from optimal, since complete Bose-Einstein condensation requires only that $\langle \Psi_N, \N_+ \Psi_N \rangle \ll N$. If one is only interested in the low-energy eigenvalues of $H_N$, then by using the min-max principle and a bootstrap argument as in \cite{LewNamSerSol-13}, one can show that the optimal assumption $\langle \Psi_N, \N_+ \Psi_N \rangle \ll N$ is indeed sufficient to ensure the convergence of the eigenvalues of $H_N$, provided that $u_0$ is a non-degenerate Hartree minimizer.
For the eigenvalues of $H_N$ which are far from the ground state energy, however, we do not know how to improve the condensation assumption $\langle \Psi_N, \N_+ \Psi_N \rangle \ll N^{1/3}$, even with the non-degeneracy condition as in \eqref{eq:non-degeneracy} or \eqref{eq:non-degeneracy-bH}. 

We note that the assumption on complete condensation is inevitable for the higher part of the spectrum. For example, if $w=0$ (i.e., the non-interacting case) and $T$ has compact resolvent, then for two   orthonormal eigenfunctions $u$ and $v$ of $T$,  $u^{\otimes N/2}\otimes_s v^{\otimes N/2}$ is an eigenfunction of $H_N$, but it does not display  complete condensation. 

Next we restrict our attention to the low-energy spectrum of $H_N$. The results in \cite{Seiringer-11,GreSei-13,LewNamSerSol-13,DerNap-13} are limited to the case when the Hartree functional has a unique, non-degenerate minimizer. However, it may happen that the Hartree functional has multiple minimizers, and one of the minimizer may be degenerate as well. For instance, in the case of attractive interactions a broken symmetry can lead to multiple minimizers \cite{AscFroGraSchTro-02,GuoSei-13}. Also in the repulsive case, symmetry breaking can occur in rotating systems due to the appearance of quantized vortices \cite{Sei-02,Sei-03,Aftalion-06,Fetter-09}.

If the Hartree functional has only finitely many minimizers and all of them are non-degenerate, then one may still follow the approach in \cite{LewNamSerSol-13} to obtain the convergence of the low-energy spectrum of $H_N$. The only difference is that in this case, we have many possible condensate states and the  whole union of the spectra of the corresponding Bogoliubov Hamiltonians contribute to the spectrum of $H_N$ at the large $N$ limit. 

\begin{theorem}[Excitation spectrum with multiple condensations] \label{thm:excitation-spectrum} 
Assume that $T$ has compact resolvent and that $T$ and $w$ satisfy \eqref{eq:bound-w-T}. Assume that the Hartree functional $\E_{\rm H}$ has exactly (up to a phase) $J$ (normalized) minimizers $\{f_j\}_{j=1}^J$ and all of them are non-degenerate. We associate each function $f_j$ with a Bogoliubov
 Hamiltonian $\bH_j$ acting on  $\F_{+j}=\bigoplus_{m=0}^\infty \bigotimes^m_{\rm sym} (\{f_j\}^\bot)$ in the same way as in (\ref{eq:Bogoliubov-Hamiltonian}), with $u_0$ replaced by $f_j$. Let $\mu_1(H_N) \le \mu_2(H_N) \le \dots$ be the eigenvalues of $H_N$. Let $\{\mu_\ell \}_{\ell=1}^\infty$ be the increasing sequence which is rearranged from the union (counting multiplicity) of the eigenvalues of the $\bH_j$'s. Then we have 
$$ \lim_{N\to \infty} \Big( \mu_\ell(H_N) - Ne_{\rm H}  \Big)= \mu_\ell$$
for every $\ell=1,2,\dots$, where $e_{\rm H}=\E_{\rm H}(f_j)$ is the Hartree ground state energy. 

Moreover, assume that $\mu_L<\mu_{L+1}=\mu_{L'}<\mu_{L'+1}$ for some $L'>L \ge 0$ (with the convention $\mu_0=-\infty$) and that the $L'-L$ numbers $\mu_{L+1},\dots,\mu_{L'}$ consist of $r_j \ge 0$ eigenvalues of $\bH_j$ (counting multiplicity) with the corresponding eigenvectors $\{\Phi_{j,i}\}_{i=1}^{r_j}$, for all $j\in \{1,2,\dots,J\}$. Then for every $\ell\in \{L+1,\dots,L'\}$, there is a subsequence of $\{\Psi_{N, \ell}\}_N$ (still denoted by $\{\Psi_{N, \ell}\}_N$ for short) satisfying
\begin{align} \label{eq:cv-eigenfunction-PsiN}
\lim_{N\to \infty} \Big\| \Psi_{N,\ell} - \sum_{j=1}^J \sum_{m=1}^{r_j} \theta_{j,m} U_{N,j}^\dagger \1_{\F_{+j}^{\le N}}\Phi_{j,m}\Big\| =0
\end{align}
where the $\theta_{j,m}$'s are complex numbers satisfying $ \sum_{j=1}^J \sum_{m=1}^{r_j} |\theta_{j,m}|^2 =1$. Here we have denoted by $U_{N,j}$ the unitary operators defined as in (\ref{eq:def-UN}) with $u_0$ replaced by $f_j$.
\end{theorem}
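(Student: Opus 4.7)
The plan is to establish matching upper and lower bounds on $\mu_\ell(H_N) - N e_{\rm H}$ converging to $\mu_\ell$, and then deduce the eigenvector convergence \eqref{eq:cv-eigenfunction-PsiN} by a compactness argument. Since $T$ has compact resolvent and each $f_j$ is non-degenerate, each Bogoliubov Hamiltonian $\bH_j$ has compact resolvent and satisfies \eqref{eq:non-degeneracy-bH}, so its eigenvectors automatically lie in the domain required by Theorem \ref{thm:collective-excitation}. The crucial geometric input is that any two distinct minimizers satisfy $|\langle f_j, f_{j'}\rangle| < 1$ (since they differ by more than a phase), giving exponential decorrelation of the associated trial states.

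\textbf{Upper bound.} For each $j$ and each eigenvector $\Phi_{j,i}$ of $\bH_j$ with eigenvalue $\lambda_{j,i}$, form the trial state $\Psi_{N,j,i}^{\rm trial} := U_{N,j}^\dagger \1_{\F_{+j}^{\le N}} \Phi_{j,i}$. Theorem \ref{thm:operator-bound} applied with $u_0 = f_j$ gives $\langle \Psi_{N,j,i}^{\rm trial}, H_N \Psi_{N,j,i}^{\rm trial}\rangle = N e_{\rm H} + \lambda_{j,i} + O(N^{-1/2})$. For $j \ne j'$ the explicit formula \eqref{eq:alt-def-UN*} expresses each trial state as a finite sum over $k$ of $(a^\dagger(f_j))^{N-k}/\sqrt{(N-k)!}$ applied to $k$-particle functions orthogonal to $f_j$ and of bounded total norm; inner products between the two sides therefore factor through $\langle f_j, f_{j'}\rangle^{N-k}$, which decays exponentially, and the same mechanism kills the off-diagonal matrix elements of $H_N$ after controlling the interaction via \eqref{eq:bound-w-T}. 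The min-max principle applied on the span of the $\Psi_{N,j,i}^{\rm trial}$'s then yields $\mu_\ell(H_N) \le N e_{\rm H} + \mu_\ell + o(1)$.

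\textbf{Lower bound.} For a normalized approximate eigenvector $\Psi_N$ with $\langle \Psi_N, H_N \Psi_N\rangle \le N e_{\rm H} + C$, the quantum de Finetti argument of \cite{LewNamRou-13}, combined with convergence of the energy per particle to $e_{\rm H}$, forces (up to subsequence) $\gamma_{\Psi_N}^{(1)} \to \sum_j p_j P_j$ with $P_j := |f_j\rangle\langle f_j|$, $p_j \ge 0$, and $\sum_j p_j = 1$. A bootstrap exploiting \eqref{eq:non-degeneracy} for each $f_j$, in the spirit of \cite{LewNamSerSol-13}, upgrades this to complete condensation $\langle \Psi_N, \dGamma(1 - \sum_j P_j) \Psi_N\rangle \to 0$. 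I then build an IMS-type partition of unity decomposing $\Psi_N = \sum_j \Psi_N^{(j)} + o(1)$ with each $\Psi_N^{(j)}$ condensing on a single $f_j$; the exponential smallness of $\langle f_j, f_{j'}\rangle^N$ makes these pieces asymptotically orthogonal and the localization error $o(1)$. Mapping each $\Psi_N^{(j)}$ through $U_{N,j}$ and invoking Theorem \ref{thm:operator-bound} replaces $H_N - N e_{\rm H}$ by $\bH_j$ on that piece up to $o(1)$, and applying min-max on the direct sum $\bigoplus_j \bH_j$ yields $\mu_\ell(H_N) \ge N e_{\rm H} + \mu_\ell - o(1)$.

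\textbf{Eigenvectors and main obstacle.} With eigenvalue convergence established, the gap assumption $\mu_L < \mu_{L+1} = \cdots = \mu_{L'} < \mu_{L'+1}$ combined with the preceding decomposition forces, up to subsequence, each $U_{N,j} \Psi_{N,\ell}^{(j)}$ to converge to a vector in the $\mu_{L+1}$-eigenspace of $\bH_j$, which is spanned by $\{\Phi_{j,i}\}_{i=1}^{r_j}$; rewriting this in $\gH^N$ gives \eqref{eq:cv-eigenfunction-PsiN}. The main technical obstacle is the construction of the localization into orthogonal pieces condensing on a single minimizer: while de Finetti supplies the one-body statistics and \eqref{eq:non-degeneracy} supplies complete condensation into $V_J := \mathrm{span}\{f_1, \ldots, f_J\}$, promoting these to a wavefunction-level decomposition that is compatible with the operator-level energy control of Theorem \ref{thm:operator-bound} demands careful IMS analysis exploiting both the $O(N^{-1/2})$ error from that theorem and the exponential mutual-overlap smallness of the minimizers.
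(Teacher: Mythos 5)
Your outline reproduces the paper's broad strategy (trial states plus min--max for the upper bound; de Finetti, localization, and comparison with the $\bH_j$'s for the lower bound; the spectral gap for the eigenvector statement), but the step you yourself flag as ``the main technical obstacle'' is exactly where the proof lives, and you do not resolve it. The paper's mechanism for splitting the $N$-body eigenfunction into pieces condensed on a single minimizer is not a geometric IMS partition in configuration or one-body space: it uses the high powers $\widehat n_j^{\,k}$ of the occupation operators $\widehat n_j = a^\dagger(f_j)a(f_j)/N$, writing $\Psi_{N,\ell}=\sum_j \widehat n_j^{\,k}\Psi_{N,\ell} + \bigl(\1-\sum_j\widehat n_j^{\,k}\bigr)\Psi_{N,\ell}$, controlling the energy cross terms through the eigenvalue equation and the commutator bound \eqref{eq:commu-HN-Ni} (an analogue of Lemma \ref{le:N-body-commutator}), and obtaining asymptotic orthogonality of the pieces from the de Finetti convergence of \emph{all} $k$-particle density matrices \eqref{eq:deF} together with the exponential decay of $\langle f_j^{\otimes k},f_{j'}^{\otimes k}\rangle$; the one-body limit $\gamma^{(1)}_{\Psi_N}\to\sum_j p_j|f_j\rangle\langle f_j|$ you invoke is not sufficient for this. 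Note also that a naive partition of unity in the occupation numbers of the different $f_j$ is delicate because the operators $\dGamma(|f_j\rangle\langle f_j|)$ for distinct $j$ do not commute; the double limit $N\to\infty$ followed by $k\to\infty$ in the $\widehat n_j^{\,k}$ construction is precisely what replaces it. Only after this splitting does the paper localize each piece in $\N_{+j}$ (with $M=N/k^2$), apply the quadratic-form consequence of Theorem \ref{thm:operator-bound}, use the non-degeneracy bound \eqref{eq:non-degeneracy-dGamma<=bH}, and run an induction over eigenvalue clusters with Lemma \ref{le:min-max-lower} to get both the lower bound and \eqref{eq:cv-eigenfunction-PsiN}. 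Without an explicit substitute for this machinery, your lower bound and eigenvector step remain a plan rather than a proof.

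Two further points would need repair even at the level of the sketch. First, the claimed bootstrap conclusion $\langle\Psi_N,\dGamma(\1-\sum_j|f_j\rangle\langle f_j|)\Psi_N\rangle\to 0$ is false: the Bogoliubov ground state carries a nonvanishing $O(1)$ number of excitations, so this quantity tends to a positive constant in general; what is available (and what the argument actually needs) is $o(N)$ from de Finetti, upgraded to quantitative control only after localization. Second, in the upper bound you take the bare trial states $U_{N,j}^\dagger\1_{\F_{+j}^{\le N}}\Phi_{j,i}$ and propose to kill off-diagonal matrix elements of $H_N$ by the exponential overlap decay alone; since $H_N$ is unbounded and of order $N$, this requires norm control $\|(H_N-Ne_{\rm H}-\mu_{j,i})\Psi^{\rm trial}\|\to 0$ rather than mere expectation values, and applying Theorem \ref{thm:operator-bound} without a number cutoff requires moments such as $\langle\Phi_{j,i},\dGamma(QTQ)\N_+^2\,\Phi_{j,i}\rangle$ which are not granted by your hypotheses. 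The paper sidesteps both issues by inserting the cutoff $f(\N_{+j}/M)$ with $M=N^{1/3}$ (Lemma \ref{lem:localization}), which yields approximate eigenvectors in norm and makes any estimate of off-diagonal matrix elements of $H_N$ unnecessary.
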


The non-degeneracy condition on the minimizers of $\E_{\rm H}$ 
 implies that there exists a constant $\eta>0$ such that
\bq \label{eq:BEC-fj-Hj}
\bH_j \ge \eta \N_{+j}-C
\eq
for $j=1,2,\dots,J$, where $\N_{+j}=\dGamma(\1_{ \{f_j\}^\bot} )$ is the number operator on $\F_{+j}$.
Eq.~(\ref{eq:BEC-fj-Hj}) ensures that each $\bH_j$ is bounded from below on $\F_{+j}$ and it has a unique ground state (see \cite[Theorem 1]{LewNamSerSol-13}). 

When $T$ has compact resolvent, the spectra of $H_N$ and the $\bH_j$'s are all discrete. Compactness of the resolvent of $T$ corresponds to the physical situation of trapped systems, when all particles are confined by an external potential. The result of Theorem~\ref{thm:excitation-spectrum} can be generalized and holds even in case $T$ does not have compact resolvent, as long as a suitable binding condition for the Hartree functional holds. In fact, the only place where we shall use the compactness of the resolvent of $T$ is to obtain the condensation of the low-energy eigenfunctions of $H_N$ in Hartree minimizers. This condensation is well-known for trapped systems \cite{FanSpoVer-80,PetRagVer-89,RagWer-89,LewNamRou-13} thanks to the quantum de Finetti Theorem \cite{Stormer-69,HudMoo-75}, but it also holds under more general conditions (see \cite[Theorem 1.1]{LewNamRou-13}).

If $J=1$, i.e., when the Hartree functional has a unique minimizer, the result in Theorem \ref{thm:excitation-spectrum} was already proved in \cite{LewNamSerSol-13}. To deal with the more general case, we utilize the quantum de Finetti theorem \cite{HudMoo-75} for trapped systems, as in \cite{FanSpoVer-80,PetRagVer-89,RagWer-89,LewNamRou-13}. It allows to split the $N$-body eigenfunctions into several components, which each component corresponding to a condensate in one Hartree minimizer. Each component can then be treated by adapting the method in \cite{LewNamSerSol-13}. We shall explain the details of the proof of Theorem \ref{thm:excitation-spectrum} in Section \ref{sec:multiple-Hartree}.  

Theorem~\ref{thm:excitation-spectrum} holds if the Hartree functional has finitely many non-degenerate minimizers. If $u_0$ is a {\em degenerate} Hartree minimizer, then  the corresponding Bogoliubov Hamiltonian $\bH$ may have infinitely many ground states. In this case, we can still apply Theorem \ref{thm:collective-excitation} to deduce that $\inf \sigma(\bH)$ is an accumulation point of the spectrum of $H_N- Ne_{\rm H} $ when $N\to \infty$, in the sense that for every $\eps>0$, the number of eigenvalues of $H_N - N e_{\rm H}$ (counting multiplicity) contained in $(-\eps,\eps)$ tends to infinity as $N\to \infty$. However, we cannot give a full analysis for the excitation spectrum of $H_N$ (in particular, we cannot show that $H_N-Ne_{\rm H}$ is bounded from below uniformly in $N$). We hope to be able to come back to this  problem in the future. 

If the system is invariant under a certain symmetry (e.g. rotations or translations), then the Hartree functional may have infinitely many Hartree minimizers related via that symmetry,  and the corresponding Bogoliubov Hamiltonians are all unitarily equivalent. In this case, the (common) ground state energy of the Bogoliubov Hamiltonian is also an accumulation point of the spectrum of $H_N$, due to Theorem \ref{thm:collective-excitation}. 

The proofs of the main theorems are given in the next sections.

\section{Proof of  Theorem \ref{thm:operator-bound}: Operator bound}\label{sec:operator-bound}

In this section, we give the proof of Theorem \ref{thm:operator-bound}. 

\begin{proof} A straightforward computation as in \cite[Eq. (44)]{LewNamSerSol-13} gives
\begin{align} \label{eq:bHN-decompose}
U_N H_N U_N^\dagger- N \E_{\rm H}(u_0) -\1^{\le N} \bH \1^{\le N} = \frac{1}{2}\sum_{j=0}^5 \Big( R_j + R_j^\dagger \Big) 
\end{align}
where we denote $\1_{\F_+^{\le N}}$ by $\1^{\le N}$ for short, and the $R_j$'s are operators on $\F_+^{\le N}$ defined as follows:
\begin{align*}
R_0 &= R_0^\dagger=  \frac1 2 W_{0000}\frac{\N_+(\N_+-1)}{N-1},\\
R_1&= R_1^\dagger= -\dGamma\left(Q(|u_0|^2*w)Q+K_1 \right) \frac{\N_+-1}{N-1}, \\
R_2&= -  2 a^\dagger (Q (|u_0|^2* w) u_0)\frac{\N_+ \sqrt{N-\N_+}}{N-1} , \\
R_3&= \1^{\le N} \sum_{m,n\ge 1} \langle u_m \otimes u_n, K_2 \rangle   a_m^\dagger a_n^\dagger \Big(  \frac{\sqrt{(N-\N_+ )(N-\N_+ -1)}}{N-1}-1 \Big),\\
R_4 & = R_4^\dagger= \frac{1}{2(N-1)} \sum_{m,n,p,q \ge 1} W_{mnpq} a_m ^\dagger a_n ^\dagger a_p a_q , \\
R_5 &= \frac{2}{N-1} \sum_{m,n,p \ge 1} W_{mnp0} a_m^\dagger a_n^\dagger a_p \sqrt{N-\N_+} .
\end{align*}
By the Cauchy-Schwarz inequality, we have the operator bound on $\F_+^{\le N}$ 
\bq \label{eq:operator-square} \Big( U_N H_N U_N^\dagger- N \E_{\rm H}(u_0) -\1^{\le N} \bH \1^{\le N} \Big)^2 \le 3 \sum_{j=0}^5 (R_j^\dagger R_j + R_j R_j^\dagger ).
\eq

Now we estimate carefully all terms on the right side of (\ref{eq:operator-square}). 
\\  
$\mathbf{j=0}$:
We can simply bound
\bq \label{eq:R0}
R_0^2 = \frac{1}{4}|W_{0000}|^2 \frac{\N_+^2(\N_+-1)^2}{(N-1)^2} \le \frac{C\N_+^4}{N^2}.
\eq
$\mathbf{j=1}$: Using the Cauchy-Schwarz inequality and assumption (\ref{eq:bound-w-T}), we find that
\begin{equation}
\label{u2w}
 (|u_0|^2*w)^2 \le |u_0|^2*(w^2) \le C T .
 \end{equation}
Moreover, $K_1$ is a Hilbert-Schmidt operator, and hence it is bounded. Therefore, the operator
$$X:=Q(|u_0|^2*w)Q+K_1$$
satisfies $X^2 \le C QTQ.$ Thus
\begin{align} \label{eq:bound-dQTQ}
(\dGamma (X))^2 &= \sum_{i} X_i ^2 + \sum_{i \ne j} X_i X_j 
\le \sum_{i} X_i ^2 + \sum_{i \ne j} \frac{X_i^2 Q_j + X_j^2 Q_i}{2} \nn\\
&\le C \sum_{i} (QTQ)_i + C \sum_{i \ne j} \frac{(QTQ)_i  Q_j + (QTQ)_j  Q_i}{2}\nn\\
&\le C \dGamma (QTQ) (\N_+ +1). 
\end{align}
Since $\dGamma(X)$ and $\N_+$ commute, we get
\begin{align} \label{eq:R1}
R_1^2  \le \frac{C}{N^2} \dGamma(QTQ) (\N_+ +1)^3 .
\end{align}
$\mathbf{j=2}$:  Because of (\ref{u2w}), the vector $v:=Q (|u_0|^2*w)u_0$ is bounded in $\H$. Using 
$$a(v) a^\dagger(v)= \|v\|^2 + a^\dagger(v) a(v) \le \|v\|^2 (\N_++1)$$
we obtain
\begin{align}\label{eq:R2*R2}
 R_2^\dagger R_2 &= 4 \frac{ \N_+ \sqrt{N-\N_+}}{N-1} a(v) a^\dagger(v)  \frac{ \N_+ \sqrt{N-\N_+}}{N-1} \nn \\
& \le 4 \|v\|^2  \frac{(\N_++1)\N_+^2 (N-\N_+)}{(N-1)^2} \le \frac{C \N_+^3}{N}.
\end{align}
On the other hand, since 
$$
R_2^\dagger = -2 \frac{ \N_+ \sqrt{N-\N_+}}{N-1} a(v) = -2 a(v) \frac{ (\N_+ -1) \sqrt{N-\N_++1}}{N-1}
$$
we find that
\begin{align}\label{eq:R2R2*}
R_2 R_2^\dagger  &= 4 \frac{ (\N_+ -1) \sqrt{N-\N_++1}}{N-1} a^\dagger(v) a(v) \frac{ (\N_+ -1) \sqrt{N-\N_++1}}{N-1} \nn \\
& \le 4 \|v\|^2  \frac{ \N_+ (\N_+ -1)^2 (N-\N_++1)}{(N-1)^2} \le \frac{C \N_+^3}{N}.
\end{align}
$\mathbf{j=3}$: We can rewrite
$$
R_3 = \mathbb{K_{\rm cr}} g(\N_+) = g(\N_+ -2)\mathbb{K_{\rm cr}}, \quad R_3^\dagger = g(\N_+) \mathbb{K}_{\rm cr} ^\dagger = \mathbb{K}_{\rm cr} ^\dagger  g(\N_+-2)\,,
$$
where
\begin{align} \label{eq:def-Kcr}
\mathbb{K_{\rm cr}}:= \sum_{m,n\ge 1} \langle u_m \otimes u_n, K_2 \rangle   a_m^\dagger a_n^\dagger
\end{align}
and
$$
g(\N_+):= \1_{\{\N_+\le N-2\}}\Big(  \frac{\sqrt{(N-\N_+ )(N-\N_+ -1)}}{N-1}-1 \Big).
$$
Note that if $\Phi_k \in \gH_+^k$ with $k\le N-2$, then 
\begin{align*} \mathbb{K}_{\rm cr} \Phi_k =  \frac{1}{\sqrt{k!(k+2)!}}\!\sum_{\sigma\in\mathfrak{S}_{k+2}}\!\!K_2(x_{\sigma(1)},x_{\sigma(2)})\,\Phi_k(x_{\sigma(3)},\dots,x_{\sigma(k+2)}),
\end{align*}
and hence 
\begin{align*}\| \mathbb{K}_{\rm cr} \Phi_k \|  \le  \sqrt{(k+1)(k+2)}\|K_2\|_{L^2} \| \Phi_k \|_{\gH_+^k} \le C(k+1)^2 \| \Phi_k\|_{\gH_+^k}
\end{align*}
by the triangle inequality. Therefore,
\bq \label{eq:Kcr*Kcr}
\mathbb{K}_{\rm cr} ^\dagger \mathbb{K}_{\rm cr} \le C (\N_+ +1)^2.
\eq
Consequently,
\begin{align} \label{eq:R3*R3}
R_3^\dagger R_3 = g(\N_+) \mathbb{K}_{\rm cr} ^\dagger \mathbb{K}_{\rm cr} g(\N_+) \le C (\N_+ +1)^2 g(\N_+)^2 \le \frac{C(\N_++1)^4}{N^2}.
\end{align}
Similarly, since
\begin{align} \label{eq:KcrKcr*}
\mathbb{K}_{\rm cr}\mathbb{K}_{\rm cr} ^\dagger   & = \Big(\sum_{m,n\ge 1} \langle u_m \otimes u_n, K_2 \rangle   a_m^\dagger a_n^\dagger \Big) \Big(\sum_{p,q\ge 1} \langle K_2, u_p \otimes u_q \rangle   a_p a_q \Big) \nn\\
&\le \Big( \sum_{m,n\ge 1} |\langle u_m \otimes u_n, K_2 \rangle|^2 \Big) \Big( \sum_{p,q\ge 1} a^\dagger_p a^\dagger_q a_p a_q \Big) \le \|K_2 \|^2_{L^2(\Omega^2)} \N_+^2 ,
\end{align}
we find that
\begin{align}\label{eq:R3R3*} R_3 R_3^\dagger &=  g(\N_+ -2) \mathbb{K}_{\rm cr}\mathbb{K}_{\rm cr} ^\dagger g(\N_+ -2) \le \frac{C(\N_++1)^4}{N^2}.
\end{align}
$\mathbf{j=4}$: On $\F_+^{\le N}$, we have
$$ R_4=\frac{1}{N-1} U_N \Big( \sum_{1\le i<j \le N} w'_{ij} \Big) U_N^\dagger \,,$$
where we have introduced the two-body operator 
$$w'=Q\otimes Q w(.-.) Q\otimes Q.$$
Therefore
$$ R_4^2=\frac{1}{(N-1)^2} U_N \Big(  \sum_{1\le i<j \le N}\sum_{1\le k<\ell \le N} w'_{ij} w'_{k\ell} \Big) U_N^\dagger. $$
From assumption (\ref{eq:bound-w-T}), get
$$(w')^2 \le C Q\otimes Q \Big( \1 \otimes T + T \otimes \1) Q\otimes Q = C \Big((QTQ) \otimes Q + Q \otimes (QTQ) \Big)$$
on $\gH^2$. Consequently,
\begin{equation} \label{eq:R4-a}\sum_{i\ne j} (w'_{ij})^2 \le \sum_{i\ne j}  C \Big((QTQ)_i  Q_j + Q_j (QTQ)_i \Big) 
\le C\dGamma(QTQ) \N_+.
\end{equation}
Using the Cauchy-Schwarz inequality and (\ref{eq:R4-a}), we have
\begin{align} 
\sum_{|\{i,j,\ell\}|=3} w'_{ij} w'_{j\ell} &= \sum_{|\{i,j,\ell\}|=3} (w'_{ij}Q_\ell) (w'_{j\ell}Q_i) \nn\\
 &\le \sum_{|\{i,j,\ell\}|=3} \frac{(w'_{ij})^2Q_\ell + (w'_{j\ell})^2Q_i}{2}  \nn\\
&\le C\dGamma(QTQ) \N_+^2, \label{eq:R4-b}\\
\sum_{|\{i,j,k,\ell\}|=4} w'_{ij} w'_{k\ell} &= \sum_{|\{i,j,k,\ell\}|=4} (w'_{ij}Q_k Q_\ell) (w'_{k\ell} Q_iQ_j) \nn\\
&\le \sum_{|\{i,j,k,\ell\}|=4}   \frac{(w'_{ij})^2 Q_k Q_\ell + (w'_{k\ell})^2Q_iQ_j}{2} \nn\\
&\le C\dGamma(QTQ) \N_+^3.\label{eq:R4-c}
\end{align}
Here, the sum with the constraint $|\{i,j,\ell,k\}|=4$ means the sum over all mutually different indices $i,j,k,\ell \in \{1,2,\dots,N\}$, and likewise for the case $|\{i,j,\ell\}|=3$. Combining (\ref{eq:R4-a}), (\ref{eq:R4-b}) and (\ref{eq:R4-c}), we find that
\bq \label{eq:R4}
R_4^2 \le \frac{C}{N^2} U_N (\dGamma(QTQ)\N_+^3) U_N^\dagger = \frac{C}{N^2} \dGamma(QTQ)\N_+^3.
\eq
$\mathbf{j=5}$: We can write 
$$ R_5 =\frac{2}{N-1} U_N \Big(   \sum_{1\le i<j \le N} w''_{ij}  \Big) U_N^\dagger $$
where we have introduced the two-body operator 
$$w''= Q\otimes Q w(.-.)  (P\otimes Q+ Q\otimes P).$$
Thus
\begin{align*}
R_5^\dagger R_5 = \frac{4}{(N-1)^2} U_N \Big(   \sum_{1\le i<j \le N}  \sum_{1\le k<\ell \le N} (w'')^\dagger_{ij} w''_{k\ell}  \Big) U_N^\dagger. 
\end{align*}
From assumption  (\ref{eq:bound-w-T}), we have 
\begin{align*} (w'')^\dagger w'' &= (P\otimes Q + Q \otimes P)w Q\otimes Q w (P\otimes Q + Q \otimes P) \\
&\le (P\otimes Q + Q \otimes P)w^2 (P\otimes Q + Q \otimes P) \\
& \le C \Big( \1 \otimes (QTQ) + (QTQ)\otimes \1 \Big)
\end{align*}
on $\gH^2$. Consequently,
\begin{align*} 
\sum_{i\ne j} (w'')^\dagger_{ij} w''_{ij}  & \le C \sum_{i\ne j}  \Big( (QTQ)_i + (QTQ)_j \Big) \le C N\dGamma(QTQ), \\
\sum_{|\{i,j,\ell\}|=3} (w'')^\dagger_{j\ell} w''_{ij}  &= \sum_{|\{i,j,\ell\}|=3} (( w'')^\dagger_{j\ell}Q_i) ( w''_{ij}Q_\ell) \\
&\le \sum_{|\{i,j,\ell\}|=3} \frac{ (w'')^\dagger_{ij} w''_{ij} Q_\ell + (w'')^\dagger_{j\ell} w''_{j\ell} Q_i }{2}\\
&\le C N \dGamma(QTQ) \N_+, \\
\sum_{|\{i,j,k,\ell\}|=4} (w''_{k\ell})^\dagger w''_{ij}  &= \sum_{|\{i,j,k,\ell\}|=4} ( (w'')^\dagger_{k\ell} Q_i Q_j) (w''_{ij}Q_k Q_\ell)  \\
&\le \sum_{|\{i,j,k,\ell\}|=4} \frac{(w'')^\dagger_{ij} w''_{ij} Q_k Q_\ell + (w'')^\dagger_{k\ell} w''_{k\ell} Q_i Q_j}{2} \\
&\le C N \dGamma(QTQ) \N_+^2
\end{align*}
where we have again used the same notation for the summation over mutually distinct indices. In summary,
$$
\sum_{1\le i<j \le N}  \sum_{1\le k<\ell \le N} (w'')^\dagger_{ij} w''_{k\ell}  \le C N \dGamma(QTQ) \N_+^2
$$
and hence
\begin{equation} \label{eq:R5*R5}
R_5^\dagger R_5  \le  \frac{C}{N} \dGamma(QTQ) \N_+^2.
\end{equation}

On the other hand, again by assumption  (\ref{eq:bound-w-T}), 
\begin{align*}
 w'' (w'')^\dagger &=Q\times Q w (P\otimes Q+ Q\otimes P)^2  w Q\otimes Q\\
&\le Q\times Q w^2 Q\otimes Q \le C \Big( Q\otimes (QTQ) + (QTQ)\otimes Q \Big)
\end{align*}
on $\gH^2$. Therefore, 
\begin{align*} 
\sum_{i\ne j}  w''_{ij} (w'')^\dagger_{ij} & \le C \sum_{i\ne j}  \Big((QTQ)_i Q_j  + Q_i(QTQ)_j \Big) \le C \dGamma(QTQ) \N_+, \\
\sum_{|\{i,j,\ell\}|=3}  w''_{j\ell} (w'')^\dagger_{ij} &= \sum_{|\{i,j,\ell\}|=3} \frac{ w''_{j\ell} (w'')^\dagger_{ij} + w''_{ij}(w'')^\dagger_{j\ell}  }{2} \\
&\le \sum_{|\{i,j,\ell\}|=3} \frac{ w''_{ij} (w'')^\dagger_{ij}  + w''_{j\ell} (w'')^\dagger_{j\ell}  }{2}\\
&\le C N \dGamma(QTQ) \N_+, \\
\sum_{|\{i,j,k,\ell\}|=4}  w''_{k\ell} (w'')^\dagger_{ij} &= \sum_{|\{i,j,k,\ell\}|=4} \Big( w''_{k\ell} (Q_i+Q_j)\Big)\Big( (w'')^\dagger_{ij} (Q_k+Q_\ell ) \Big) \\
&\le \sum_{|\{i,j,k,\ell\}|=4} \frac{ w''_{ij} (w'')_{ij}^\dagger (Q_k+Q_\ell) + w''_{k\ell} (w'')_{k\ell}^\dagger  (Q_i+Q_j) }{2} \\
&\le C N \dGamma(QTQ) \N_+^2.
\end{align*}
Thus
\begin{align} \label{eq:R5R5*}
 R_5 R_5^\dagger &= \frac{4}{(N-1)^2} U_N \Big(   \sum_{1\le i<j \le N}  \sum_{1\le k<\ell \le N}  w''_{k\ell} (w'')^\dagger_{ij}  \Big) U_N^\dagger \le  \frac{C}{N} \dGamma(QTQ) \N_+^2.
\end{align}
{\bf Conclusion.} We now collect all the bounds (\ref{eq:R0}), (\ref{eq:R1}), (\ref{eq:R2*R2}), (\ref{eq:R2R2*}), (\ref{eq:R3*R3}), (\ref{eq:R3R3*}),  (\ref{eq:R4}),  (\ref{eq:R5*R5}) and  (\ref{eq:R5R5*}). Using, in addition, that $\N_+ \le \dGamma(QTQ)$ and $\N_+\le N$ on $\F_+^{\le N}$, we obtain 
\begin{align} \label{eq:Rj*Rj-RjRj*} \sum_{j=0}^5 (R_j^\dagger R_j + R_j R_j^\dagger) \le \frac{C}{N} \Big(\dGamma(QTQ)\N_+^2 +1\Big)
\end{align}
on  $\F_+^{\le N}$. Inserting (\ref{eq:Rj*Rj-RjRj*}) into (\ref{eq:operator-square}), we find the desired operator inequality
$$ \Big( U_N H_N U_N^\dagger -N \E_{\rm H}(u_0) -\1_{\F_+^{\le N}}\bH \1_{\F_+^{\le N}} \Big)^2 \le \frac{C}{N} \Big(\dGamma(QTQ)\N_+^2 +1\Big).$$

From the obtained estimates, we can also easily deduce the self-adjointness of $U_N H_N U_N^\dagger$ and $\1_{\F_+^{\le N}} \bH \1_{\F_+^{\le N}}$. For the Bogoliubov Hamiltonian restricted to $\F_+^{\le N}$, we can write
$$
\1_{\F_+^{\le N}} \bH \1_{\F_+^{\le N}} = \dGamma(QTQ)+ \dGamma(h+K_1-QTQ)+ \1_{\F_+^{\le N}} ( \mathbb{K}_{\rm cr} +  \mathbb{K}_{\rm cr}^\dagger)\1_{\F_+^{\le N}}\,,
$$
with $\mathbb{K}_{\rm cr}$ defined in (\ref{eq:def-Kcr}). Similarly to (\ref{eq:bound-dQTQ}), we have
\begin{align} \label{eq:dGammah-dGammaQTQ}
(\dGamma(h+K_1-QTQ))^2 \le C \dGamma(QTQ)(\N_++1).
\end{align}
Moreover, from (\ref{eq:Kcr*Kcr}) and (\ref{eq:KcrKcr*}) we get
$$
\1_{\F_+^{\le N}} ( \mathbb{K}_{\rm cr} +  \mathbb{K}_{\rm cr}^\dagger)\1_{\F_+^{\le N}} ( \mathbb{K}_{\rm cr} +  \mathbb{K}_{\rm cr}^\dagger)\1_{\F_+^{\le N}} \le 
\1_{\F_+^{\le N}} ( \mathbb{K}_{\rm cr} +  \mathbb{K}_{\rm cr}^\dagger)^2\1_{\F_+^{\le N}} \le C(\N_+^2 +1).
$$
Thus by the Cauchy-Schwarz inequality, 
$$
\Big(\1_{\F_+^{\le N}} \bH \1_{\F_+^{\le N}}- \dGamma(QTQ) \Big)^2 \le \frac{1}{2} (\dGamma(QTQ))^2 + C (\N_++1)^2  
$$
on $\F_+^{\le N}$. Similarly, from (\ref{eq:Kcr*Kcr}), (\ref{eq:KcrKcr*}), (\ref{eq:Rj*Rj-RjRj*}) and (\ref{eq:dGammah-dGammaQTQ}) we find that
\begin{align} \label{eq:N-body-kinetic}
\Big(U_N H_N U_N^\dagger -N \E_{\rm H}(u_0)- \dGamma(QTQ) \Big)^2 \le \frac{1}{2} (\dGamma(QTQ))^2 + C (\N_++1)^2  
\end{align}
on $\F_+^{\le N}$. Since $\N_+$ is a bounded operator on $\F_+^{\le N}$, the  Kato--Rellich Theorem \cite[Theorem X.13]{ReeSim2} allows us to conclude that $\1_{\F_+^{\le N}} \bH \1_{\F_+^{\le N}}$ and $U_N H_N U_N^\dagger$ can be extended to self-adjoint operators on $\F_+^{\le N}$ with the domain $D(\1_{\F_+^{\le N}} \dGamma(QTQ) \1_{\F_+^{\le N}})$. 
\end{proof}

\section{Proof of Theorem \ref{thm:collective-excitation}: Bogoliubov excitations} \label{sec:collective-excitation}

We shall deduce Theorem  \ref{thm:collective-excitation} from the following

\begin{lemma}[Localization] \label{lem:localization} Assume that $(\bH -\lambda) \Phi=0$ (in the sense explained in Theorem~\ref{thm:collective-excitation}) for some normalized vector $\Phi\in \F_+$ and some $\lambda\in \mathbb{R}$. Let $f: \R \to [0,1]$ be a Lipschitz function such that $f(t)=1$ when $t\le 1/2$ and $f(t)=0$ when $t\ge 1$. Then for every $M \in [1,N]$,
\begin{align*} 
\Big\| \big(U_N H_N U_N^\dagger -N \E_{\rm H}(u_0) -\lambda \big) f \Big(\frac{\N_+}{M} \Big)\Phi \Big\|^2 \\
\le C \Big( \frac{M^2}{N}+\frac{\|f'\|_{\infty}^2}{M}\Big)  \max\{ 1, \lambda, \langle \Phi, \N_+ \Phi \rangle\}.
\end{align*}
\end{lemma}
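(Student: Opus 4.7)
The plan is to decompose $A_N - \lambda = (A_N - \widetilde{\bH}) + (\widetilde{\bH} - \lambda)$, where $A_N := U_N H_N U_N^\dagger - N\E_{\rm H}(u_0)$ and $\widetilde{\bH} := \1_{\F_+^{\le N}}\bH\,\1_{\F_+^{\le N}}$, and to bound the two contributions to $(A_N - \lambda)f(\N_+/M)\Phi$ separately. The first will be handled by Theorem~\ref{thm:operator-bound} together with a quadratic form bound on $\dGamma(QTQ)$ derived from the eigenvalue equation, while the second reduces to the commutator $[\bH,f(\N_+/M)]$, which is small because $f(\N_+/M)$ is nearly constant in $\N_+$.

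For the first piece, the support of $f(\N_+/M)$ sits in $\{\N_+\le M\}\subset\F_+^{\le N}$, and since $\dGamma(QTQ)$ commutes with $\N_+$, Theorem~\ref{thm:operator-bound} gives
$$\|(A_N - \widetilde{\bH})f(\N_+/M)\Phi\|^2 \le \frac{CM^2}{N}\,\langle f(\N_+/M)\Phi,\,(\dGamma(QTQ)+1)\,f(\N_+/M)\Phi\rangle.$$
The operator lower bound $\bH \ge \tfrac{1}{2}\dGamma(QTQ) - C(\N_++1)$, obtained from (\ref{eq:Kcr*Kcr}), (\ref{eq:KcrKcr*}) and (\ref{eq:dGammah-dGammaQTQ}) together with the Cauchy--Schwarz/AM-GM step already used in Theorem~\ref{thm:operator-bound}, yields $\dGamma(QTQ)\le 2\bH + C(\N_++1)$. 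Using $\bH\Phi=\lambda\Phi$ to write $\langle f\Phi,\bH f\Phi\rangle = \lambda\langle\Phi,f^2\Phi\rangle + \langle\Phi,f[\bH,f]\Phi\rangle$ and controlling the commutator by the estimate below then gives $\langle f\Phi,(\dGamma(QTQ)+1)f\Phi\rangle \le C\max\{1,\lambda,\langle\Phi,\N_+\Phi\rangle\}$, so this piece contributes at most $CM^2/N$ times the required maximum.

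For the second piece, the equation $\bH\Phi=\lambda\Phi$ together with $f(\N_+/M)\Phi\in\F_+^{\le M}\subset\F_+^{\le N}$ gives
$$(\widetilde{\bH}-\lambda)f(\N_+/M)\Phi = \1_{\F_+^{\le N}}[\bH,f(\N_+/M)]\Phi.$$
Since $\dGamma(h+K_1)$ commutes with $\N_+$, only the pairing part contributes, and since $\mathbb{K}_{\rm cr}$ raises $\N_+$ by two,
$$[f(\N_+/M),\mathbb{K}_{\rm cr}] = \mathbb{K}_{\rm cr}\,g(\N_+),\qquad g(t):=f((t+2)/M)-f(t/M),$$
with $|g|\le 2\|f'\|_\infty/M$ and $g$ supported on $\{t\lesssim M\}$. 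The key pointwise bound is $(\N_++1)\,g(\N_+)^2 \le C\|f'\|_\infty^2/M$, obtained from $|g|^2\le (2\|f'\|_\infty/M)^2$ and $\N_++1\le M+1$ on $\supp g$. Combined with $\mathbb{K}_{\rm cr}^\dagger\mathbb{K}_{\rm cr}\le C(\N_++1)^2$ from (\ref{eq:Kcr*Kcr}), this yields $\|[\mathbb{K}_{\rm cr},f(\N_+/M)]\Phi\|^2 \le \frac{C\|f'\|_\infty^2}{M}\langle\Phi,(\N_++1)\Phi\rangle$, and the analogous bound for $[\mathbb{K}_{\rm cr}^\dagger,f(\N_+/M)]$ uses $\mathbb{K}_{\rm cr}\mathbb{K}_{\rm cr}^\dagger\le C\N_+^2$ from (\ref{eq:KcrKcr*}).

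Putting the two pieces together via $\|X+Y\|^2\le 2\|X\|^2+2\|Y\|^2$ then gives the claimed inequality. The main obstacle is the commutator estimate: one has to exploit \emph{both} the smallness of $g(\N_+)$ (of order $\|f'\|_\infty/M$) \emph{and} the fact that its support lies in $\{\N_+\lesssim M\}$, in order to improve the naive bound $\|\mathbb{K}_{\rm cr}\psi\|\lesssim\|(\N_++1)\psi\|$ into a gain of $\|f'\|_\infty^2/M$ times only a \emph{linear} moment of $\N_+$ on $\Phi$. This refinement is exactly what produces the factor $\|f'\|_\infty^2/M$ (rather than $\|f'\|_\infty^2$) in the final bound.
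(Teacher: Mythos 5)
Your proposal follows essentially the same route as the paper's proof: the same splitting of $\big(U_NH_NU_N^\dagger-N\E_{\rm H}(u_0)-\lambda\big)f(\N_+/M)\Phi$ into the Theorem~\ref{thm:operator-bound} piece and the piece $(\bH-\lambda)f(\N_+/M)\Phi=[\mathbb{K},f(\N_+/M)]\Phi$, and the same key commutator refinement (combining the smallness $\|f'\|_\infty/M$ of the finite difference of $f$ with the support restriction $\N_+\lesssim M$ to upgrade $\|f'\|_\infty^2M^{-2}(\N_++1)^2$ to $\|f'\|_\infty^2M^{-1}(\N_++1)$), which is exactly \eqref{eq:[K,f]} in the paper.

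The one step where you diverge is the control of $\langle f\Phi,(\dGamma(QTQ)+1)f\Phi\rangle$ in the first piece. You go through $\dGamma(QTQ)\le 2\bH+C(\N_++1)$ and then $\langle f\Phi,\bH f\Phi\rangle=\lambda\|f\Phi\|^2+\langle f\Phi,[\mathbb{K},f]\Phi\rangle$; bounding the commutator term by your second-piece estimate leaves a factor $\|f'\|_\infty M^{-1/2}$, so this piece comes out as $CM^2N^{-1}\big(1+\|f'\|_\infty M^{-1/2}\big)\max\{1,\lambda,\langle\Phi,\N_+\Phi\rangle\}$. The extra contribution $M^{3/2}\|f'\|_\infty N^{-1}$ is not dominated by $M^2N^{-1}+\|f'\|_\infty^2M^{-1}$ uniformly in $(f,M)$ (try $M\sim N^{3/4}$, $\|f'\|_\infty\sim N^{5/8}$), so as written you get a slightly weaker bound in such corner regimes, though this is immaterial for the fixed cutoff function used in the applications. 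The paper sidesteps this: since $\dGamma(h+K_1)$ commutes with $f(\N_+/M)$, the eigenvalue equation gives the identity $\dGamma(h+K_1)f(\N_+/M)\Phi=f(\N_+/M)(\lambda-\mathbb{K})\Phi$, and then $|\mathbb{K}|\le C(\N_++1)$ (from \eqref{eq:Kcr*Kcr} and \eqref{eq:KcrKcr*}) together with Cauchy--Schwarz yields $\langle f\Phi,\dGamma(h+K_1)f\Phi\rangle\le\lambda+C+C\langle\Phi,\N_+\Phi\rangle$ with no $f$-dependence (the paper also uses $QTQ\le C(h+K_1+C)$ rather than your lower bound on $\bH$, but that difference is cosmetic). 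Replacing your commutator step for this expectation by that one-line identity makes your argument deliver the lemma exactly in the stated form.
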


\begin{proof} Let us assume $\langle \Phi, \N_+ \Phi \rangle<\infty$ (otherwise there is nothing to prove).  Denote $ \Phi^{\le M}:= f(\N_+/M)\Phi$ for short. Note that $\Phi^{\le M}\in \F_+^{\le M}\subset \F_+^{\le N}$. By the triangle and the Cauchy-Schwarz inequality,  
\begin{align}\label{eq:col-triangle}
& \| (U_N H_N U_N^\dagger -N \E_{\rm H}(u_0) -\lambda) \Phi^{\le M} \|^2 \nn\\
&\le 2\| (U_N H_N U_N^\dagger -N \E_{\rm H}(u_0) -\1_{\F_+^{\le N}} \bH) \Phi^{\le M} \|^2 +  2\| \1_{\F_+^{\le N}} (\bH-\lambda) \Phi^{\le M} \|^2 \nn \\
& \le 2\| (U_N H_N U_N^\dagger -N \E_{\rm H}(u_0) -\1_{\F_+^{\le N}} \bH) \Phi^{\le M} \|^2 +  2\| (\bH-\lambda) \Phi^{\le M} \|^2 .
\end{align}
We shall estimate each term of the right side of (\ref{eq:col-triangle}) separately. We shall denote $\1_{\F_+^{\le N}}$ by $\1^{\le N}$ for short.
\\
{\bf First term}. Theorem \ref{thm:operator-bound} implies that 
\begin{align} \label{eq:col-RS1-a} & \| (U_N H_N U_N^\dagger -N \E_{\rm H}(u_0) -\1^{\le N} \bH) \Phi^{\le M} \|^2 \nn\\
 &\le \frac{C}{N} \Big \langle \Phi^{\le M} , \big(\dGamma(QTQ)\N_+^2+1 \big)\Phi^{\le M} \Big\rangle \nn \\
&\le \frac{CM^2}{N} \Big \langle \Phi^{\le M} , \big(\dGamma(h+K_1)+ C\N_+ +1 \big) \Phi^{\le M} \Big\rangle .
\end{align}
In the last estimate we have used  that $\Phi^{\le M}\in \F_+^{\le M}$ as well as the bound $QTQ \le C(h+K_1 +C)$, which follows from assumption (\ref{eq:bound-w-T}). From the equation $(\bH-\lambda)\Phi=0$, we find that
\begin{align} \label{eq:dG=H-K}
\dGamma(h+K_1) \Phi^{\le M} =f \Big(\frac{\N_+}{M} \Big) \dGamma(h+K_1) \Phi = f \Big(\frac{\N_+}{M} \Big) (\lambda - \mathbb{K}) \Phi
\end{align}
where 
$$\mathbb{K}:=\mathbb{H}-\dGamma(h+K_1)=\frac{1}{2}(\mathbb{K}_{\rm cr}+ \mathbb{K}_{\rm cr}^\dagger)\,,$$
with $\mathbb{K_{\rm cr}}$ defined in (\ref{eq:def-Kcr}). 
From (\ref{eq:Kcr*Kcr}) and (\ref{eq:KcrKcr*}), we have
\begin{align} \label{eq:bK2}
\mathbb{K}^2 \le \frac{1}{2} ( \mathbb{K}_{\rm cr}^\dagger \mathbb{K}_{\rm cr} + \mathbb{K}_{\rm cr} \mathbb{K}_{\rm cr}^\dagger) \le C(\N_++1)^2.
\end{align}
Consequently, $|\mathbb{K}| \le C(\N_++1)$ and hence
\begin{align*}
& \Big \langle \Phi^{\le M} , \dGamma(h+K_1) \Phi^{\le M} \Big\rangle = \lambda \|\Phi^{\le M} \|^2 - \Big \langle f \Big(\frac{\N_+}{M} \Big)^2 \Phi,  \mathbb{K} \Phi \Big\rangle \\
&\le \lambda \|\Phi^{\le M} \|^2  +  \Big \langle f \Big(\frac{\N_+}{M} \Big)^2 \Phi,  C(\N_+ +1) f \Big(\frac{\N_+}{M} \Big)^2 \Phi \Big\rangle ^{1/2}  \Big \langle  \Phi,  C(\N_+ +1) \Phi \Big\rangle^{1/2} \\
&\le \lambda + C + C  \langle  \Phi,  \N_+ \Phi \rangle .
\end{align*}
Thus from (\ref{eq:col-RS1-a}) it follows that
\begin{align} \label{eq:col-RS1} &  \big\| (U_N H_N U_N^\dagger -N \E_{\rm H}(u_0)- \1^{\le N}\bH) \Phi^{\le M} \big\|^2 \nn\\
&\le \frac{CM^2}{N}  \Big \langle \Phi , (\lambda + C\N_+ +C )\Phi \Big\rangle \le \frac{CM^2}{N} \max\{1,\lambda, \langle \Phi, \N_+ \Phi \rangle\}.
\end{align}
{\bf Second term.} From the equation $(\bH-\lambda)\Phi=0$ we get
\bq \label{eq:col-RS2-a}
(\bH-\lambda) \Phi^{\le M} = \left[\bH, f \Big(\frac{\N_+}{M} \Big) \right] \Phi = \left[\mathbb{K}, f \Big(\frac{\N_+}{M} \Big) \right]  \Phi.
\eq
 We have
\begin{align*}
& 2\left[\mathbb{K}, f \Big(\frac{\N_+}{M} \Big) \right] = \left[\mathbb{K}_{\rm cr}, f \Big(\frac{\N_+}{M} \Big) \right] + \left[\mathbb{K}_{\rm cr}^\dagger, f \Big(\frac{\N_+}{M} \Big) \right] \\
& = \mathbb{K}_{\rm cr} \left( f \Big(\frac{\N_+}{M} \Big) - f \Big(\frac{\N_++2}{M} \Big)  \right) + \mathbb{K}_{\rm cr}^\dagger \left( f \Big(\frac{\N_+}{M} \Big) - f \Big(\frac{\N_+-2}{M} \Big)  \right).
\end{align*}
By the Cauchy-Schwarz inequality $(A+B)^\dagger (A+B)\le 2(A^\dagger A + B^\dagger B)$, we find that
\begin{align*}
& -4\left[\mathbb{K}, f \Big(\frac{\N_+}{M} \Big) \right]^2 = 4 \left[\mathbb{K}, f \Big(\frac{\N_+}{M} \Big) \right]^\dagger \left[\mathbb{K}, f \Big(\frac{\N_+}{M} \Big) \right] \\
&\le 2 \left( f \Big(\frac{\N_+}{M} \Big) - f \Big(\frac{\N_++2}{M} \Big)  \right)  \mathbb{K}_{\rm cr}^\dagger  \mathbb{K}_{\rm cr} \left( f \Big(\frac{\N_+}{M} \Big) - f \Big(\frac{\N_++2}{M} \Big)  \right) \\
& \quad + 2 \left( f \Big(\frac{\N_+}{M} \Big) - f \Big(\frac{\N_+-2}{M} \Big)  \right) \mathbb{K}_{\rm cr}\mathbb{K}_{\rm cr}^\dagger  \left( f \Big(\frac{\N_+}{M} \Big) - f \Big(\frac{\N_+-2}{M} \Big)  \right)\,.
\end{align*}
Using again $\mathbb{K}_{\rm cr}^\dagger \mathbb{K}_{\rm cr} + \mathbb{K}_{\rm cr} \mathbb{K}_{\rm cr}^\dagger \le C(\N_++1)^2$ and the operator bound
$$ \left( f\Big( \frac{\N_+ \pm 2}{M}\Big) -f\Big(\frac{\N_+}{M}\Big) \right)^2 \le \frac{4\|f'\|^2_{\infty}}{M^2}\1_{\F_+^{\le M+2}}, $$ 
we find that 
\begin{align} \label{eq:[K,f]}& - \left[\mathbb{K}, f \Big(\frac{\N_+}{M} \Big) \right] ^2 \le \frac{C\|f'\|^2_{\infty} }{M^2} (\N_+^2 + 1) \1_{\F_+^{\le M+2}} \le \frac{C\|f'\|^2_{\infty}  }{M}(\N_+ + 1) .
\end{align}
Thus we deduce from (\ref{eq:col-RS2-a}) that
\bq \label{eq:col-RS2}
\big\| (\bH-\lambda) \Phi^{\le M} \big\|^2  \le \frac{C\|f\|_{\infty}^2}{M} \langle \Phi, (\N_+ +1) \Phi \rangle.
\eq

Inserting (\ref{eq:col-RS1}) and  (\ref{eq:col-RS2}) into (\ref{eq:col-triangle}), we find the desired estimate. 
\end{proof}

\begin{remark} If we assume $\|\N_+ \Phi\|^2<\infty$ instead of $\langle \Phi, \N_+ \Phi \rangle<\infty$, then by following the above argument we obtain the improved estimate
\begin{align}  \label{eq:improved-rem1}
& \Big\| \big(U_N H_N U_N^\dagger -N \E_{\rm H}(u_0) -\lambda \big) f \Big(\frac{\N_+}{M} \Big)\Phi \Big\|^2 \nn\\
&\le C \Big( \frac{M}{N}+\frac{\|f'\|_{\infty}^2}{M^2}\Big)  \max\{ 1, |\lambda|\, \| \N_+ \Phi \|, \|\N_+ \Phi\|^2\}.
\end{align}
This estimate follows from the following refinements of (\ref{eq:col-RS1}) and (\ref{eq:col-RS2}), respectively:
\begin{align} \label{eq:col-RS1-refine} &  \big\| (U_N H_N U_N^\dagger -N \E_{\rm H}(u_0)- \1^{\le N}\bH) \Phi^{\le M} \big\|^2 \nn\\
&\le \frac{CM}{N} \max\{1,|\lambda| \| \N_+ \Phi \|, \| \N_+ \Phi \|^2 \},
\end{align}
and
\bq \label{eq:col-RS2-refine}
\big\| (\bH-\lambda) \Phi^{\le M} \big\|^2  \le \frac{C\|f\|_{\infty}^2}{M^2} \langle \Phi, (\N_+ +1)^2 \Phi \rangle.
\eq
In fact, similarly to (\ref{eq:col-RS2}), the bound (\ref{eq:col-RS2-refine}) follows immediately from (\ref{eq:col-RS2-a}) and (\ref{eq:[K,f]}). To obtain (\ref{eq:col-RS1-refine}), we can replace (\ref{eq:col-RS1-a}) by 
\begin{align*} 
&\| (U_N H_N U_N^\dagger -N \E_{\rm H}(u_0) -\1^{\le N} \bH) \Phi^{\le M} \|^2 \nn\\
& \le \frac{CM}{N} \Big \| ( \dGamma(h+K_1) +1) \Phi^{\le M} \Big\| \, \Big\| (\N_+ +1) \Phi^{\le M} \Big\|
\end{align*}
and then bound $\big \| \dGamma(h+K_1)  \Phi^{\le M} \big\|$ using \eqref{eq:dG=H-K} and \eqref{eq:bK2}. 
\end{remark}

Now we are able to give the

\begin{proof}[Proof of Theorem  \ref{thm:collective-excitation}] Let us assume that $\max_{1\le j \le m} \langle \Phi_j, \N_+ \Phi_j \rangle <\infty$ (otherwise, there is nothing to prove). 
Fix a Lipschitz function $f:\R\to [0,1]$ as in Lemma \ref{lem:localization} (we can take $f(t)=\1_{(-\infty,1/2]} (t) + 2(1-t)\1_{[1/2,1]}(t)$). Then for every $M\in [1,N]$ and $j\in \{1,2,\dots,m\}$, from Lemma \ref{lem:localization} one has
\begin{align}\label{eq:col-tradiction-LS} &\quad \Big\| \big( U_N H_N U_N^\dagger -N \E_{\rm H}(u_0) -\lambda \big) f\Big(\frac{\N_+}{M} \Big)\Phi_j \Big\|^2 \nn
\\
&\le C  \Big( \frac{M^2}{N}+ \frac{1}{M}\Big)\max\big\{ 1, \lambda, \langle \Phi_j, \N_+ \Phi_j \rangle \big\}.
\end{align}
Moreover, since $f(t)=1$ when $t\le 1/2$,
\begin{align}\label{eq:col-tradiction-orthogonal} &\quad \left| \delta_{jk} - \left\langle f\Big( \frac{\N_+}{M}\Big) \Phi_j, f\Big( \frac{\N_+}{M}\Big) \Phi_k \right \rangle  \right| = \left| \left\langle \Phi_j, \Big( \1-  f^2\Big( \frac{\N_+}{M}\Big) \Big) \Phi_k \right \rangle\right| \nn\\
&\le \max_{i} \left| \left\langle \Phi_i, \Big( \1-  f^2\Big( \frac{\N_+}{M}\Big) \Big) \Phi_i \right \rangle\right| \le \frac{2}{M}\max_{i} \langle \Phi_j, \N_+ \Phi_j \rangle .
\end{align}

The conclusion then follows from \eqref{eq:col-tradiction-LS}, \eqref{eq:col-tradiction-orthogonal} and a standard technique in spectral theory. To be precise, let us take $\eps>0$ and assume that the interval $(\lambda-\eps,\lambda+\eps)$ contains no point of the essential spectrum and fewer than $m$ eigenvalues (counting multiplicity) of the (self-adjoint) operator $H_N-N \E_{\rm H}(u_0) $ on $\gH^N$. Since $U_N:\gH^N \to \F_+^{\le N}$ is a unitary transformation,  there exist orthonormal vectors $\{v_k\}_{k=1}^{m-1}\subset \F_+^{\le N}$ such that 
\bq \label{eq:col-contradiction}
 \Big( U_N H_N U_N^\dagger -N \E_{\rm H}(u_0) -\lambda \Big)^2  \ge \eps^2  \Big( \1 -\sum_{k=1}^{m-1}  |v_k \rangle \langle v_k| \Big).
 \eq
Denote 
$$ \delta:= m  \max\big\{ 1, \lambda, \max_{1\le j \le m} \langle \Phi_j, \N_+ \Phi_j \rangle \big\}.$$
Combining \eqref{eq:col-tradiction-LS}, \eqref{eq:col-tradiction-orthogonal} and \eqref{eq:col-contradiction}, we have
\begin{align}\label{eq:col-semi-final}
C \delta \Big( \frac{M^2}{N}+ \frac{1}{M}\Big) &\ge \sum_{j=1}^m \Big\| \big( U_N H_N U_N^\dagger -N \E_{\rm H}(u_0) -\lambda \big) f\Big(\frac{\N_+}{M} \Big)\Phi_j \Big\|^2 \nn\\
& \ge\eps^2 \sum_{j=1}^m  \left( \Big\| f \Big(\frac{\N_+}{M} \Big) \Phi_j \Big\|^2 - \sum_{k=1}^{m-1} \Big|\Big\langle v_k, f \Big(\frac{\N_+}{M} \Big) \Phi_j \Big\rangle\Big|^2 \right)\nn\\
& = \eps^2 \left( \sum_{j=1}^m   \Big\| f \Big(\frac{\N_+}{M} \Big) \Phi_j \Big\|^2 -\sum_{k=1}^{m-1} \sum_{j=1}^m \Big|\Big\langle f \Big(\frac{\N_+}{M} \Big) v_k,  \Phi_j \Big\rangle\Big|^2 \right) \nn\\
&\ge \eps^2 \left( \sum_{j=1}^m   \Big\| f \Big(\frac{\N_+}{M} \Big) \Phi_j \Big\|^2 -\sum_{k=1}^{m-1} \Big\| f \Big(\frac{\N_+}{M} \Big) v_k \Big\|^2 \right) \nn\\
& \ge \eps^2 \left( \sum_{j=1}^m   \Big\| f \Big(\frac{\N_+}{M} \Big) \Phi_j \Big\|^2 -m+1\right) \nn\\
& \ge \eps^2 \Big( 1-\frac{2\delta}{M} \Big) .
\end{align}
Here we have used the orthonormality of $\{\Phi_j\}_{j=1}^m$ and the fact that $|f|\le 1$. By choosing $M=\max\{3\delta,N^{1/3}\}$, we obtain
\bq \label{eq:col-final}
 \eps \le C \max \big\{ \delta^{1/2} N^{-1/6}, \delta^{3/2} N^{-1/2} \big\}
 \eq
and the desired conclusion follows.
\end{proof}

\begin{remark} If $\delta':=m\max\{1,\lambda, \max_{1\le j \le m}\|\N_+\Phi_j\|\}<\infty,$
then using the refinement in \eqref{eq:improved-rem1} we can improve (\ref{eq:col-semi-final}) to
\bq \label{eq:col-semi-final2}
C (\delta')^2 \Big( \frac{M}{N}+ \frac{1}{M^2}\Big) \ge \eps^2 \Big( 1-\frac{4(\delta')^2}{M^2} \Big) .
\eq
By choosing $M=\max\{3\delta',N^{1/3}\}$, this yields the improved bound 
$$\eps \le C \max\{\delta' N^{-1/3}, (\delta')^{3/2} N^{-1/2}\}.$$
\end{remark}

\section{Proof of Theorem \ref{thm:many-body-collective-excitation}: N-body excitations} \label{sec:many-body-collective-excitation}

In order to prove Theorem \ref{thm:many-body-collective-excitation}, we shall need the following analogue of (\ref{eq:[K,f]}).

\begin{lemma}[Commutator estimate]\label{le:N-body-commutator} Let $f:\R\to [0,1]$ be a Lipschitz function as in Lemma \ref{lem:localization}. Then for every $M\in [1,N]$, one has
\begin{align*} - \left[ U_N H_N U_N^\dagger, f\Big(\frac{\N_+}{M} \Big) \right]^2 \le \frac{C\|f'\|_{\infty}^2}{M^2} \Big( \frac{1}{N} \dGamma(QTQ)\N_+^2 + (\N_++1)^2 \Big) \1^{\le M+2}
\end{align*}  
on $\F_+^{\le N}$, where we have again denoted $\1_{\F_+^{\le k}}$ by $\1^{\le k}$ for short.
\end{lemma}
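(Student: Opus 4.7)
The starting point is the decomposition already established in the proof of Theorem~\ref{thm:operator-bound}, namely
\begin{equation*}
U_N H_N U_N^\dagger - N\E_{\rm H}(u_0) = \1^{\le N}\bH\1^{\le N} + \tfrac{1}{2}\sum_{j=0}^5\bigl(R_j + R_j^\dagger\bigr),
\end{equation*}
combined with the splitting $\1^{\le N}\bH\1^{\le N}=\dGamma(h+K_1)+\1^{\le N}\mathbb{K}\1^{\le N}$. Since $f(\N_+/M)$ commutes with every function of $\N_+$, the scalar $N\E_{\rm H}(u_0)$, the $\N_+$-preserving operators $\dGamma(h+K_1)$, $R_0$, $R_1$, $R_4$, and the cutoffs $\1^{\le N}$ all drop out of the commutator. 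Only the ``off-diagonal'' pieces $\1^{\le N}\mathbb{K}\1^{\le N}$, $R_2$, $R_3$, $R_5$ (and their adjoints) contribute, and my plan is to bound the commutator square of each piece separately and combine via $(A_1+\cdots+A_n)^\dagger(A_1+\cdots+A_n)\le n\sum_i A_i^\dagger A_i$.

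The engine is the shift rule: if $A$ raises $\N_+$ by $k\in\{1,2\}$ then $f(\N_+/M)A=A f((\N_++k)/M)$, hence
\begin{equation*}
[A,f(\N_+/M)]=A\,\phi_k(\N_+),\qquad \phi_k(\N_+):=f(\N_+/M)-f\bigl((\N_++k)/M\bigr).
\end{equation*}
Since $f$ is Lipschitz and constant outside $[1/2,1]$, one has $\phi_k(\N_+)^2\le k^2\|f'\|_\infty^2/M^2$ together with $\phi_k(\N_+)=\phi_k(\N_+)\1^{\le M+1}$. Because $A^\dagger A$ and $AA^\dagger$ preserve $\N_+$ and therefore commute with $\phi_k(\N_+)$, this yields, for each such piece,
\begin{equation*}
[A,f(\N_+/M)]^\dagger [A,f(\N_+/M)]\le \frac{k^2\|f'\|_\infty^2}{M^2}\,A^\dagger A\,\1^{\le M+2},
\end{equation*}
and analogously for the adjoint commutator with $AA^\dagger$ in place of $A^\dagger A$. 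The task therefore reduces to reusing the operator bounds on $A^\dagger A$, $AA^\dagger$ that were already proved in Section~\ref{sec:operator-bound}.

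Now I collect the four contributions. For $\mathbb{K}_{\rm cr}$, estimates (\ref{eq:Kcr*Kcr}), (\ref{eq:KcrKcr*}) give $\mathbb{K}_{\rm cr}^\dagger\mathbb{K}_{\rm cr}+\mathbb{K}_{\rm cr}\mathbb{K}_{\rm cr}^\dagger\le C(\N_++1)^2$, producing $(C\|f'\|^2_\infty/M^2)(\N_++1)^2\1^{\le M+2}$ (this is essentially the computation already performed in (\ref{eq:[K,f]})). For $R_3=\mathbb{K}_{\rm cr}\,g(\N_+)$ with $|g(\N_+)|\le C(\N_++1)/N$, the same inputs give $(C\|f'\|^2_\infty/M^2)(\N_++1)^4/N^2\1^{\le M+2}$, which on $\F_+^{\le M+2}$ with $M\le N$ is absorbed into $(C\|f'\|^2_\infty/M^2)(\N_++1)^2\1^{\le M+2}$. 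For $R_2=a^\dagger(v)g_2(\N_+)$ with $v:=Q(|u_0|^2\ast w)u_0$ bounded and $g_2(\N_+)^2\le C\N_+^2/N$, the bounds in (\ref{eq:R2*R2})--(\ref{eq:R2R2*}) (used once more for $a(v)a^\dagger(v)\le C(\N_++1)$ and $a^\dagger(v)a(v)\le C\N_+$) produce at most $(C\|f'\|^2_\infty/M^2)(\N_++1)^3/N\1^{\le M+2}$, again absorbed into $(C\|f'\|^2_\infty/M^2)(\N_++1)^2\1^{\le M+2}$. Finally for $R_5$, the shift rule yields $[R_5,f(\N_+/M)]^\dagger[R_5,f(\N_+/M)]\le(\|f'\|^2_\infty/M^2)R_5^\dagger R_5$ (and similarly with $R_5 R_5^\dagger$), and (\ref{eq:R5*R5})--(\ref{eq:R5R5*}) give $R_5^\dagger R_5+R_5 R_5^\dagger\le (C/N)\dGamma(QTQ)\N_+^2$. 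The main (and only) place where kinetic energy appears is thus this $R_5$-contribution, which is precisely the origin of the $(1/N)\dGamma(QTQ)\N_+^2$ term in the claim. The point requiring some bookkeeping is to verify that none of the other three terms needs $\dGamma(QTQ)$, so that only a single divided-by-$N$ kinetic term remains on the right-hand side; everything else collapses into the clean $(\N_++1)^2$ bound.
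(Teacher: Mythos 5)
Your proposal is correct and follows essentially the same route as the paper: the decomposition of $U_N H_N U_N^\dagger - N\E_{\rm H}(u_0)$ from the proof of Theorem~\ref{thm:operator-bound}, the observation that only the $\N_+$-shifting pieces $\mathbb{K}_{\rm cr}$, $R_2$, $R_3$, $R_5$ (and adjoints) contribute, the shift identity giving factors $f(\N_+/M)-f((\N_+\pm k)/M)$ bounded by $k\|f'\|_\infty/M$ on $\F_+^{\le M+2}$, and the reuse of the bounds on $R_j^\dagger R_j$, $R_jR_j^\dagger$, $\mathbb{K}_{\rm cr}^\dagger\mathbb{K}_{\rm cr}$, $\mathbb{K}_{\rm cr}\mathbb{K}_{\rm cr}^\dagger$ from Section~\ref{sec:operator-bound}. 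The only cosmetic difference is that you bound each commutator piece separately (and absorb the $N^{-1}(\N_++1)^3$-type terms into $(\N_++1)^2$ using $M\le N$), whereas the paper groups the terms and bounds all quadratic expressions by the single quantity $N^{-1}\dGamma(QTQ)\N_+^2+(\N_++1)^2$.
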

\begin{remark} \label{re:HN-bHN-commutator} Since $U_N \N_+ U_N^\dagger =\N_+$, the inequality in Lemma \ref{le:N-body-commutator} is equivalent to the following operator inequality on $\gH^N$
\begin{align*} - \left[ H_N, f\Big(\frac{\N_+}{M} \Big) \right]^2 \le \frac{C\|f'\|_{\infty}^2}{M^2} \Big( \frac{1}{N} \dGamma(QTQ)\N_+^2 + (\N_++1)^2 \Big) U_N^\dagger \1^{\le M+2} U_N.
\end{align*} 
\end{remark}

\begin{proof}[Proof of Lemma \ref{le:N-body-commutator}]  Denote
\begin{equation}\label{eq:notH}
 \bH_N:= U_N H_N U_N^\dagger - N\E_{\rm H}(u_0) \,.
\end{equation}
We shall use again the decomposition as in (\ref{eq:bHN-decompose}), namely
$$ \bH_N=\dGamma(h+K_1)+ \frac{1}{2} \1^{\le N}( \mathbb{K}_{\rm cr} + \mathbb{K}_{\rm cr} ^\dagger) + \frac{1}{2}\sum_{j=0}^5 \Big( R_j + R_j^\dagger \Big) \,,
$$
with the pair-creation operator $\mathbb{K}_{\rm cr}$ defined in (\ref{eq:def-Kcr}). We have
\begin{align*} \left[ \bH_N, f\Big(\frac{\N_+}{M} \Big) \right] &= \frac{1}{2} \sum_{j=2,3,5} \left[ R_j + R_j^\dagger , f\Big(\frac{\N_+}{M} \Big) \right]  + \frac{1}{2} \left[ \1^{\le N} ( \mathbb{K}_{\rm cr} + \mathbb{K}_{\rm cr} ^\dagger) , f\Big(\frac{\N_+}{M} \Big) \right]\\
& = \frac{1}{2} (R_2+R_5) \left( f\Big(\frac{\N_+}{M} \Big) - f\Big(\frac{\N_+ +1}{M} \Big) \right) \\
&\quad + \frac{1}{2} (R_2^\dagger + R_5^\dagger  ) \left( f\Big(\frac{\N_+}{M} \Big) - f\Big(\frac{\N_+ -1}{M} \Big) \right)  \\
&\quad  + \frac{1}{2} (R_3 + \mathbb{K}_{\rm cr} \1^{\le N-2})  \left( f\Big(\frac{\N_+}{M} \Big) - f\Big(\frac{\N_+ +2}{M} \Big) \right) \\
&\quad + \frac{1}{2} ( R_3^\dagger + \mathbb{K}_{\rm cr}^\dagger) \left( f\Big(\frac{\N_+}{M} \Big) - f\Big(\frac{\N_+ -2}{M} \Big) \right) .
\end{align*}
With the aid of the Cauchy-Schwarz inequality, we can bound
\begin{align}\nn 
& -\left[ \bH_N, f\Big(\frac{\N_+}{M} \Big) \right]^2 = \left[ \bH_N , f\Big(\frac{\N_+}{M} \Big) \right]^\dagger \left[ \bH_N, f\Big(\frac{\N_+}{M} \Big) \right]  \\ \nn 
&\le 2 \left( f\Big(\frac{\N_+}{M} \Big) - f\Big(\frac{\N_+ +1}{M} \Big) \right) (R_2^\dagger R_2 + R_5^\dagger R_5 )   \left( f\Big(\frac{\N_+}{M} \Big) - f\Big(\frac{\N_+ +1}{M} \Big) \right) \\ \nn 
&\quad + 2\left( f\Big(\frac{\N_+}{M} \Big) - f\Big(\frac{\N_+ -1}{M} \Big) \right) ( R_2 R_2^\dagger + R_5 R_5^\dagger ) \left( f\Big(\frac{\N_+}{M} \Big) - f\Big(\frac{\N_+ -1}{M} \Big) \right)\\ \nn
&\quad + 2 \left( f\Big(\frac{\N_+}{M} \Big) - f\Big(\frac{\N_+ +2}{M} \Big) \right) (R_3 ^\dagger R_3 + \1^{\le N-2} \mathbb{K}_{\rm cr}^\dagger \mathbb{K}_{\rm cr} ) \left( f\Big(\frac{\N_+}{M} \Big) - f\Big(\frac{\N_+ +2}{M} \Big) \right) \\
&\quad + 2 \left( f\Big(\frac{\N_+}{M} \Big) - f\Big(\frac{\N_+ -2}{M} \Big) \right) (R_3 R_3^\dagger +\mathbb{K}_{\rm cr}\mathbb{K}_{\rm cr}^\dagger) \left( f\Big(\frac{\N_+}{M} \Big) - f\Big(\frac{\N_+ -2}{M} \Big) \right). \label{eq:b:l6}
\end{align}
Thanks to  (\ref{eq:Kcr*Kcr}), (\ref{eq:KcrKcr*}) and (\ref{eq:Rj*Rj-RjRj*}), all the terms $R_j^\dagger R_j$, $R_j R_j ^\dagger$, $\mathbb{K}_{\rm cr}^\dagger \mathbb{K}_{\rm cr}$ and $\mathbb{K}_{\rm cr}\mathbb{K}_{\rm cr}^\dagger$ can be bounded by
\begin{align} \label{eq:Rj*Rj-RjRj*-upper}
\frac{1}{N} \dGamma(QTQ)\N_+^2 + (\N_++1)^2.
\end{align}
The desired inequality then follows from the operator bound 
\begin{align} \label{eq:f-f}\left( f\Big( \frac{\N_+ \pm r}{M}\Big) -f\Big(\frac{\N_+}{M}\Big) \right)^2 \le \frac{r^2\|f'\|^2_{\infty}}{M^2}\1_{\F_+^{\le M+2}}
\end{align}
for $r=1,2.$
\end{proof}

We are now able to give the

\begin{proof}[Proof of Theorem \ref{thm:many-body-collective-excitation}] Recall the notation (\ref{eq:notH}), and let also 
$$\Phi_N := U_N \Psi_N \in \F_+^{\le N}.$$
Fix a Lipschitz function $f:\R\to [0,1]$ as in Lemma \ref{lem:localization} (we can take again $f(t)=\1_{(-\infty,1/2]} (t) + 2 (1-t)\1_{[1/2,1]}(t)$). For every $M\in [1,N]$, we denote 
$$ \Phi_N^{\le M} = f\Big(\frac{\N_+}{M} \Big) \Phi_N.$$

\noindent{\bf Step 1.} From (\ref{eq:N-body-kinetic}) and the Cauchy-Schwarz inequality, it follows that 
\begin{align} \label{eq:N-body-kinetic-1}
\bH_N ^2 \ge \frac{1}{4} (\dGamma(QTQ))^2 -C (\N_+ +1)^2.
\end{align}
Using (\ref{eq:N-body-kinetic-1}) and the equation 
$$(\bH_N -\lambda_N) \Phi_N = U_N (H_N - N\E_{\rm H}(u_0)) \Psi_N = 0\,,$$
we find that
$$
\lambda_N^2 = \| \bH_N \Phi_N \|^2 \ge \frac{1}{4} \| \dGamma(QTQ) \Phi_N \|^2 - C \| (\N_+ +1) \Phi_N\|^2\,.
$$
This implies the a-priori estimate
\begin{align} \label{eq:dGamma-PhiN}
\| \dGamma(QTQ) \Phi_N \|^2 \le C(\lambda_N^2+ 1+ N \langle \Phi_N, \N_+ \Phi_N \rangle).
\end{align}
Using again the equation $(\bH_N -\lambda_N) \Phi_N=0$ as well as Lemma \ref{le:N-body-commutator}, we obtain
\begin{align*}
& \| (\bH_N -\lambda_N)\Phi_N^{\le M} \|^2 = \left\| \left[ U_N H_N U_N, f\Big(\frac{\N_+}{M} \Big) \right] \Phi_N \right\|^2 \nn\\
&\le \frac{C}{M^2} \left\langle \Phi_N , \1^{\le M+2} \Big( N^{-1}\dGamma(QTQ)\N_+^2 + (\N_++1)^2 \Big) \Phi_N \right \rangle \nn\\
&\le \frac{1}{N^2} \| \dGamma(QTQ) \Phi_N \|^2 + \frac{C}{M^4} \| \1^{\le M+2} \N_+^2 \Phi_N \|^2 +  \frac{C}{M^2} \| \1^{\le M+2} (\N_++1) \Phi_N \|^2 .
\end{align*}
Using (\ref{eq:dGamma-PhiN}) and the operator inequality $\1^{\le M+2} \N_+ \le M+2$, we deduce that
\begin{align} \label{eq:loc-HN}
\| (\bH_N -\lambda_N)\Phi_N^{\le M} \|^2 \le \frac{C(\lambda_N^2 +1)}{N^2} + \frac{C}{M} \langle \Phi_N, (\N_+ +1) \Phi_N \rangle.
\end{align}
{\bf Step 2.} Using again (\ref{eq:N-body-kinetic-1}) and (\ref{eq:loc-HN}) we find that
\begin{align} \label{eq:dGamma-PhiN<=M}
\| \dGamma(QTQ) \Phi_N^{\le M} \|^2 &\le C \| \bH_N \Phi_N^{\le M}\|^2 + C \| (\N_+ +1) \Phi_N^{\le M} \|^2 \nn \\
&\le C (2 \| (\bH_N -\lambda_N) \Phi_N^{\le M}\|^2 + \lambda_N^2) + C M  \langle \Phi_N, (\N_+ +1) \Phi_N \rangle \nn\\
& \le C \lambda_N^2  + C M  \langle \Phi_N, (\N_+ +1) \Phi_N \rangle .
\end{align}
By Theorem \ref{thm:operator-bound} and (\ref{eq:dGamma-PhiN<=M}),
\begin{align} \label{eq:bHN-bH-PhiN<=M}
 \|(\bH_N -\1^{\le N}\bH)\Phi_N^{\le M}\|^2 &\le \frac{C}{N} \langle \Phi_N^{\le M}, (\dGamma(QTQ)\N_+^2+1 ) \Phi_N^{\le M} \rangle \nn\\
 &\le \frac{C M }{N} \| \dGamma(QTQ)  \Phi_N^{\le M} \| ^2 + \frac{C}{N M}  \| \N_+^2 \Phi_N^{\le M} \| ^2 + \frac{C}{N} \nn\\
 &\le \frac{CM(\lambda_N^2 +1)}{N} + \frac{CM^2}{N} \langle \Phi_N, \N_+ \Phi_N \rangle .
\end{align}
{\bf Step 3.} Combining (\ref{eq:loc-HN}) and (\ref{eq:bHN-bH-PhiN<=M}), we get
\begin{align*}
\|(\1^{\le N} \bH -\lambda_N)\Phi_N^{\le M}\|^2 &\le 2\|(\bH_N -\lambda_N)\Phi_N^{\le M}\|^2 + 2  \|(\bH_N -\1^{\le N}\bH)\Phi_N^{\le M}\|^2 \nn\\
&\le \frac{CM(\lambda_N^2 +1)}{N} + \left( \frac{C}{M}+\frac{CM^2}{N}\right)\langle \Phi_N, (\N_++1) \Phi_N \rangle.
\end{align*}
If $M=N^{1/3}\le N-2$, then the latter estimate implies that
\begin{align} \label{eq:bH-lambdaN-PhiN<=M}
\|(\bH -\lambda_N)\Phi_N^{\le M}\|^2 \le \frac{C\lambda_N^2 }{N^{2/3}} + \frac{C}{N^{1/3}}\langle \Phi_N, (\N_++1) \Phi_N \rangle.
\end{align}
Under the assumptions $|\lambda_N|\ll N^{1/3}$ and $\langle \Phi_N, \N_+ \Phi_N \rangle = \langle \Psi_N, \N_+ \Psi_N \rangle\ll N^{1/3}$, we can conclude from (\ref{eq:bH-lambdaN-PhiN<=M}) that
$$ \|(\bH -\lambda_N)\Phi_N^{\le M}\|^2 \to 0$$ 
as $N\to \infty$. Moreover, the choice $M=N^{1/3}$ also ensures that
$$
\|\Phi_N^{\le M}\|^2 \ge 1- \frac{\langle \Phi_N, \N_+ \Phi_N \rangle }{M} \to 1
$$
as $N\to \infty$. Thus we can take $\Phi_N'= \Phi_N^{\le M}/\|\Phi_N^{\le M}\|$ and complete the proof.
\end{proof}

\section{Proof of Theorem \ref{thm:excitation-spectrum}: Multiple condensations} \label{sec:multiple-Hartree}

In this section, we give the proof of Theorem \ref{thm:excitation-spectrum}. We shall use the min-max principle (see \cite{ReeSim4}), which we quickly recall below for the reader's convenience. If $A$ is a self-adjoint operator on a Hilbert space $\mathfrak{K}$ and $A$ is bounded from below, we can define the min-max values $\mu_1(A)\le \mu_2(A) \le \dots$
$$ \mu_j(A)=\inf_{\mathfrak{M}\subset \mathfrak{K}, \dim \mathfrak{M} =j} ~~~ \max_{\varphi\in \mathfrak{M}, \|\varphi\|=1 } \langle \varphi, A \varphi \rangle.$$
If $\lim_{j\to \infty}\mu_j(A)=\infty$, then $\{\mu_j(A)\}_{j=1}^\infty$ are all eigenvalues of $A$. In particular, under the condition that $T$ has compact resolvent, all eigenvalues of $H_N$ and $\bH_j$ are given by the min-max values. 

\begin{proof}[Proof of Theorem \ref{thm:excitation-spectrum}] 
{\em Upper bound.} We first prove the upper bound  
\bq \label{eq:mul-cond-upper-bound}\limsup_{N\to \infty} \Big( \mu_\ell(H_N) - Ne_{\rm H} \Big) \le \mu_\ell
\eq
for all $\ell \in \mathbb{N}$. For every $j\in \{1,2,\dots,J\}$, let $\mu_{j,1}\le \mu_{j,2} \le  \dots$ be the first eigenvalues of $\bH_j$ and let $\Phi_{j,1}, \Phi_{j,2}, \dots$ be the corresponding eigenvectors in $\F_{+j}$. Let $U_{N,j}: \gH^N \to \F_{+j}^{\le N}$ be the unitary transformation as in (\ref{eq:def-UN}) with $u_0$ replaced by $f_j$. Fix a Lipschitz function $f:\R\to [0,1]$ as in Lemma \ref{lem:localization} (we can take $f(t)=\1_{(-\infty,1/2]} (t) + 2(1-t)\1_{[1/2,1]}(t)$). Let $M=N^{1/3}$ and let
$$
\Psi_{N,j,k}:= U_{N,j}^\dagger f\Big(\frac{\N_{+j}}{M} \Big) \Phi_{j,k}.
$$
Then from Lemma \ref{lem:localization} one has
\begin{align} \label{eq:upper-bound-operator-cv} \lim_{N \to \infty} \Big\| \big( H_N -N e_{\rm H} - \mu_{j,k}\big) \Psi_{N,j,k} \Big\|^2 =0
\end{align}
for all $j\in \{1,2,\dots,J\}$ and $k\in \mathbb{N}$. Here we have used $\langle \Phi_{j,k}, \N_+ \Phi_{j,k} \rangle <\infty$, which follows from the non-degeneracy assumption (\ref{eq:BEC-fj-Hj}).

On the other hand, we will show that 
\begin{align} \label{eq:upper-bound-orthogonality}
\lim_{N\to \infty}\langle \Psi_{N,j,k}, \Psi_{N,j',k'}\rangle = \delta_{jj'} \delta_{kk'}
\end{align}
for all $j,j'\in \{1,2,\dots,J\}$ and $k,k'\in \mathbb{N}$. The upper bound (\ref{eq:mul-cond-upper-bound}) then follows from (\ref{eq:upper-bound-operator-cv}) and  (\ref{eq:upper-bound-orthogonality}) and the following consequence  of the min-max principle. 

\begin{lemma} Let $\{A_N\}_{N=1}^\infty$ be a sequence of self-adjoint operators on a Hilbert space. Assume that each $A_N$ is bounded from below and let $\mu_1(A_N),\mu_2(A_N)$ be the min-max values. Let $\lambda_1\le \lambda_2 \le \dots$ and let $\varphi_{N,1}, \varphi_{N,2},\dots\in D(A_N)$ such that
$$ \lim_{N\to \infty} (A_N - \lambda_k)\varphi_{N,k} =0\quad \text{and}\quad \lim_{N\to \infty} \langle \varphi_{N,k}, \varphi_{N,k'}\rangle =\delta_{kk'}$$
for all $k,k'\in \mathbb{N}$. Then for all $k\in \mathbb{N}$,
$$\limsup_{N\to \infty}\mu_k(A_N)\le \lambda_k.$$
\end{lemma}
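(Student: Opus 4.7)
The plan is to use the vectors $\varphi_{N,1},\dots,\varphi_{N,k}$ to build a $k$-dimensional trial subspace for the min-max characterization of $\mu_k(A_N)$, and to exploit the asymptotic orthonormality together with the approximate eigenvalue equations to show that the Rayleigh quotient on that subspace is bounded above by $\lambda_k + o(1)$.

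First I would set, for fixed $k$ and large $N$, $\mathfrak{M}_N:=\mathrm{span}\{\varphi_{N,1},\dots,\varphi_{N,k}\}$. Let $G_N$ be the Gram matrix with entries $(G_N)_{j,j'}=\langle\varphi_{N,j},\varphi_{N,j'}\rangle$ and $M_N$ the matrix with entries $(M_N)_{j,j'}=\langle\varphi_{N,j},A_N\varphi_{N,j'}\rangle$. The asymptotic orthonormality hypothesis gives $G_N\to I_k$, so $G_N$ is invertible for $N$ large and $\dim\mathfrak{M}_N=k$. For $M_N$, write
\[
\langle\varphi_{N,j},A_N\varphi_{N,j'}\rangle=\lambda_{j'}\langle\varphi_{N,j},\varphi_{N,j'}\rangle+\langle\varphi_{N,j},(A_N-\lambda_{j'})\varphi_{N,j'}\rangle.
\]
The first term tends to $\lambda_{j'}\delta_{jj'}$, while the second is bounded in modulus by $\|\varphi_{N,j}\|\,\|(A_N-\lambda_{j'})\varphi_{N,j'}\|\to 0$ since $\|\varphi_{N,j}\|$ is bounded and $(A_N-\lambda_{j'})\varphi_{N,j'}\to 0$ by assumption. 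Hence $M_N\to\mathrm{diag}(\lambda_1,\dots,\lambda_k)$.

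By the min-max principle applied with the trial space $\mathfrak{M}_N$,
\[
\mu_k(A_N)\le\max_{\varphi\in\mathfrak{M}_N,\,\|\varphi\|=1}\langle\varphi,A_N\varphi\rangle.
\]
Writing $\varphi=\sum_{j=1}^k c_j\varphi_{N,j}$ and letting $c=(c_1,\dots,c_k)^\top$, the constraint becomes $c^*G_Nc=1$ and the Rayleigh quotient becomes $c^*M_Nc$. The right-hand side above equals the largest generalized eigenvalue of the pair $(M_N,G_N)$, equivalently the largest eigenvalue of $G_N^{-1/2}M_NG_N^{-1/2}$. Since $G_N\to I_k$ and $M_N\to\mathrm{diag}(\lambda_1,\dots,\lambda_k)$, continuity of eigenvalues under matrix perturbation yields
\[
\max_{\varphi\in\mathfrak{M}_N,\,\|\varphi\|=1}\langle\varphi,A_N\varphi\rangle\;\longrightarrow\;\max_{1\le j\le k}\lambda_j=\lambda_k,
\]
where in the last step we used the ordering $\lambda_1\le\lambda_2\le\dots\le\lambda_k$. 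Taking $\limsup_{N\to\infty}$ gives $\limsup_N\mu_k(A_N)\le\lambda_k$.

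The only potential obstacle is ensuring the $k$-dimensional trial space is genuinely $k$-dimensional and that the quadratic form is well-defined on it; both are immediate once $G_N\to I$ (hence invertible for large $N$) and once we recognize that the hypothesis $(A_N-\lambda_k)\varphi_{N,k}\to 0$ forces $\varphi_{N,k}\in D(A_N)$ and $A_N\varphi_{N,k}$ to be norm-bounded, so all matrix entries of $M_N$ make sense. Nothing else is truly delicate — this is a standard soft consequence of the min-max principle and the convergence of finite-dimensional matrix spectra.
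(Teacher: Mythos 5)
Your proof is correct: the Gram-matrix argument (trial space spanned by $\varphi_{N,1},\dots,\varphi_{N,k}$, $G_N\to I$, $M_N\to\mathrm{diag}(\lambda_1,\dots,\lambda_k)$, then the min-max principle and continuity of the top eigenvalue of $G_N^{-1/2}M_N G_N^{-1/2}$) is exactly the standard elementary argument the paper has in mind when it leaves the proof to the reader, and all the small points (invertibility of $G_N$ for large $N$, Hermiticity of $M_N$, boundedness of $\|\varphi_{N,j}\|$) are handled.
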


The proof of this lemma is elementary and is left to the reader. It remains to verify the orthogonality (\ref{eq:upper-bound-orthogonality}). In fact, if $j= j'$, then using (\ref{eq:col-tradiction-orthogonal}) we have 
$$
\lim_{N\to \infty}\langle \Psi_{N,j,k}, \Psi_{N,j,k'}\rangle  = \lim_{N\to \infty} \left\langle f\Big(\frac{\N_{+j}}{M} \Big) \Phi_{j,k}, f\Big(\frac{\N_{+j}}{M} \Big) \Phi_{j,k'} \right\rangle_{\F_{+j}} = \delta_{kk'}
$$
for all $k,k'\in \mathbb{N}$. Now assume $j\ne j'$. A direct computation using (\ref{eq:alt-def-UN*}) and the fact that $f(t)=0$ when $t\ge 1$ shows that
\begin{align} \label{eq:formula-Psi-Njk}
\Psi_{N,j,k} &= \sum_{0\le \ell \le M}  f(\ell/M)  \frac{(a^\dagger(f_j))^{N-\ell}}{\sqrt{(N-\ell)!}}  \1_{\{\N_{+j}=\ell\}}  \Phi_{j,k}\\
& = \sum_{0\le \ell \le M} f(\ell/M)  \frac{1}{\sqrt{N! (N-\ell)! \ell! }} \sum_{\sigma \in S_N} \mathcal{U}_\sigma \left( f_j^{\otimes (N-\ell)}  \otimes  \1_{\{\N_{+j}=\ell\}}  \Phi_{j,k}  \right) \nn
\end{align}
where $S_N$ is the permutation group on $\{1,2,\dots,N\}$ and $\mathcal{U}_\sigma$ permutes the variables as 
\begin{align*}
& \left(\mathcal{U}_\sigma \left( f_j^{\otimes (N-\ell)}  \otimes  \1_{\{\N_{+j}=\ell\}} \Phi_{j,k}  \right) \right) (x_1,x_2,\dots,x_N) \\
&:= f_j(x_{\sigma(1)})\cdots f_j(x_{\sigma(N-\ell)}) \left( \1_{\{\N_+=\ell\}} \Phi_{j,k} \right)(x_{\sigma(N-\ell+1)},\dots,x_{\sigma(N)}).
\end{align*}
For all $j,j'\in \{1,2,\dots,J\}$, $k,k'\in \mathbb{N}$, $0\le \ell,\ell' \le M$ and $\sigma,\sigma'\in S_N$, an application of the simple bound $\|  \1_{\{\N_+=\ell\}} \Phi_{j,k} \|\leq 1$ 
yields
\begin{align*}
& \left| \left \langle \mathcal{U}_\sigma \left( f_j^{\otimes (N-\ell)}  \otimes \1_{\{\N_{+j}=\ell\}}  \Phi_{j,k} \right),  \,\mathcal{U}_{\sigma'} \left( f_{j'}^{\otimes (N-\ell')}  \otimes \1_{\{\N_{+j'}=\ell'\}} \Phi_{j',k'} \right)  \right\rangle \right| \\ & \le \left| \langle f_j, f_{j'} \rangle \right|^{N-2M}.
\end{align*}
From the latter bound,   the formula (\ref{eq:formula-Psi-Njk}) and $|f|\leq 1$, we obtain 
\begin{align*}
\left| \langle \Psi_{N,j,k}, \Psi_{N,j',k'} \rangle \right| &\le (M+1)^2 \frac{(N!)^2}{N!(N-M)!} \left| \langle f_j, f_{j'} \rangle \right|^{N-2M} \\
&\le (M+1)^2 N^M \left| \langle f_j, f_{j'} \rangle \right|^{N-2M} \to 0
\end{align*}
as $N\to \infty$ when $j\ne j'$. Here in the last convergence we have used the choice $M=N^{1/3}$ and the fact that $| \langle f_j, f_{j'} \rangle| <1$ when $j\ne j'$. Thus \eqref{eq:upper-bound-orthogonality} holds true and the upper bound (\ref{eq:mul-cond-upper-bound}) follows. 
\\\\
{\em Lower bound.} Now we prove the lower bound
\bq \label{eq:mul-cond-lower-bound}\liminf_{N\to \infty} \Big( \mu_\ell(H_N) - Ne_{\rm H} \Big) \ge \mu_\ell.
\eq
The proof is divided into several steps.
\\\\
{\bf Step 1} (Condensation). For every $\ell\in \mathbb{N}$, let $\mu_{\ell}(H_N)$ be the $\ell$-th eigenvalue of $H_N$ and let  $\Psi_{N,\ell}$ be the corresponding eigenvectors. For every $k\in \{1,2,\dots,N\}$ we can define the $k$-particle density matrix of $|\Psi_{N,\ell} \rangle \langle \Psi_{N,\ell}|$ by taking the partial trace over all but the $k$ first variables: 
$$
\gamma_{N,\ell}^{(k)}:= \Tr_{k+1\to N} |\Psi_{N,\ell} \rangle \langle \Psi_{N,\ell}|.
$$
Thus $\gamma_{N,\ell}^{(k)}$ is a non-negative trace class operator on $\gH^k$ with $\Tr \gamma_{N,\ell}^{(k)}=1$. Equivalently, we can define $\gamma_{N,\ell}^{(k)}$ from the formula (see e.g. \cite[Sec. 1.4]{Lewin-11})
\begin{align*}
\langle 0| a(f_{1}) \cdots a(f_k) & \gamma_{N,\ell}^{(k)} a^\dagger(g_1) \cdots a^\dagger(g_k) |0  \rangle \\
&= { N\choose k} ^{-1}\langle \Psi_{N,\ell} |   a^\dagger(g_k) \cdots a^\dagger(g_1) a(f_1)\cdots a(f_k) | \Psi_{N,\ell} \rangle
\end{align*}  
for all $f_1,\dots,f_k,g_1,\dots,g_k\in \gH$.

Using the assumption that $T$ has compact resolvent and the quantum de Finetti Theorem \cite{Stormer-69,HudMoo-75} as in \cite[Theorem 3.1]{LewNamRou-13}, for every $\ell\in\mathbb{N}$ we can find a subsequence of $\Psi_{N,\ell}$, still denoted by $\Psi_{N,\ell}$ for simplicity, and non-negative constants $\lambda_{\ell,j}$ with $\sum_{j=1}^J \lambda_{\ell,j}=1$ such that 
\bq \label{eq:deF}
\lim_{N\to \infty} \gamma_{N,\ell}^{(k)} = \sum_{j=1}^J \lambda_{\ell,j} |f_j^{\otimes k} \rangle \langle f_j^{\otimes k}|
\eq
in trace class for every $k\in \mathbb{N}$. Consequently, if we denote
$$\widehat n_j:=\frac{a^\dagger (f_j)a(f_j)}{N}\,,$$
then from (\ref{eq:deF}) and the fact that $\lim_{k\to \infty}\langle f_i^{\otimes k}, f_j^{\otimes k} \rangle = \delta_{ij}$, we have
\begin{align}\label{eq:ortho-deF}\lim_{k \to \infty} \lim_{N\to \infty}  \langle \widehat n_j ^k \Psi_{N,\ell}, \widehat n_{j'} ^k \Psi_{N,\ell} \rangle  = \lambda_{\ell,j} \delta_{jj'} . 
\end{align}
Consequently, 
\bq \label{eq:cv-X-0}
\lim_{k\to \infty} \lim_{N\to \infty} \left\| \Psi_{N,\ell}  - \sum_{j=1}^J \widehat n_j^k \Psi_{N,\ell}  \right\|=0
\eq
and
\bq \label{eq:BEC-niPsi}
\lim_{k\to \infty}  \lim_{N\to \infty}  \langle \widehat n_j^k \Psi_{N,\ell}, (\1- \widehat n_j ) \widehat n_j^k\Psi_{N,\ell} \rangle =0. 
\eq
When $\lambda_{\ell,j}\ne 0$, the convergence (\ref{eq:BEC-niPsi}) means that the vector $\widehat n_j^k \Psi_{N,\ell}/ \| \widehat n_j^k \Psi_{N,\ell}\| $ describes a  Bose-Einstein condensate in $f_j$ for large $N$ and suitable large $k$. 
\text{}\\\\
{\bf Step 2} (Splitting the energy).  Using the decomposition
$$
\Psi_{N,\ell}= \sum_{j=1}^{J} \widehat n_j^k \Psi_{N,\ell} + \Big( \1- \sum_{j=1}^{J} \widehat n_j^k \Big) \Psi_{N,\ell}
$$
we have
\begin{align} \label{eq:split-expansion}
0 &= \Big\langle \Psi_{N,\ell}, (H_N-\mu_\ell(H_N) ) \Psi_{N,\ell} \Big\rangle 
= \sum_{j=1}^{J}  \Big\langle  \widehat n_j^k \Psi_{N,\ell}, (H_N - \mu_\ell(H_N) )  \widehat n_j^k \Psi_{N,\ell} \Big\rangle \nn\\
&\quad + \Big\langle \Big( \1- \sum_{j=1}^{J} \widehat n_j^k \Big) \Psi_{N,\ell} , (H_N - \mu_\ell(H_N) ) \Big( \1- \sum_{j=1}^{J} \widehat n_j^k \Big) \Psi_{N,\ell} \Big\rangle \nn\\
&\quad +  \sum_{j=1}^J {\rm Re} \Big\langle  X_j , (H_N - \mu_\ell(H_N) )  \widehat n_j^k \Psi_{N,\ell} \Big\rangle
\end{align}
where
$$
X_j := \sum_{i,i\ne j} \widehat n_i^k  \Psi_{N,\ell} + 2  \Big( \1- \sum_{i=1}^{J} \widehat n_i^k \Big) \Psi_{N,\ell} = \Big( 2 \1 - \widehat n_j^k - \sum_{i=1}^J \widehat n_i^k \Big) \Psi_{N,\ell}.
$$

Let us estimate the last sum on the right side of (\ref{eq:split-expansion}). From the equation 
\begin{equation}\label{eve}
(H_N - \mu_\ell(H_N) ) \Psi_{N,\ell} =0
\end{equation}
we have
\begin{align} \label{eq:HN-mu-ell-commutator}
(H_N - \mu_\ell(H_N) ) \widehat n_j^k \Psi_{N,\ell} = [H_N, \widehat n_j^k]\Psi_{N,\ell}.
\end{align}
The commutator $[H_N, \widehat n_j^k]$ can be estimated by following the proof of Lemma \ref{le:N-body-commutator}, with the aid of the argument in Remark \ref{re:HN-bHN-commutator}. Since $\widehat n_j = 1 - \mathcal{N}_{+j}/N$, the operator $\widehat n_j^k$ can be viewed as a function of $\mathcal{N}_{+j}$, and the same bound (\ref{eq:b:l6}) as in the proof of Lemma~\ref{le:N-body-commutator} applies. Moreover, the upper bound in (\ref{eq:Rj*Rj-RjRj*-upper}) can simply be replaced by $C(N \dGamma(Q_jTQ_j)+N^2)$, here $Q_j=1-|f_j \rangle \langle f_j|$. Instead of (\ref{eq:f-f}) we can bound 
\begin{align*}
\left(\frac{(a(f_j)^\dagger a(f_j) \pm r)^k -(a(f_j)^\dagger a(f_j))^k }{N^k} \right)^2 
&\le \left( \frac{k r}{N} \widehat n_j ^{k-1} + \frac{k^2 r^2}{N^2} \left(\widehat n_j + \frac r N\right)^{k-2} \right)^2 \\
&\le \frac{2k^2 r^2}{N^2} \widehat n_j ^{2k-2} + \frac{2 k^4 r^4}{N^4}\left( 1+ \frac 2  N \right)^{2k -4}
\end{align*}
for $r=1,2$, which follows from
$$
 |(x \pm y)^k - x^k|  \leq k y(x+y)^{k-1} \leq k y x^{k-1} + k^2 y^2 (x+y)^{k-2}
$$
for $x \ge 0$ and $y \ge 0$.
We then obtain
\begin{align}\label{eq:commu-HN-Ni}
-\big[H_N, \widehat n_j^k \big]^2 \le C \left( \frac{ \dGamma(Q_jTQ_j)}{N}+ 1 \right) \left( k^2 \widehat  n_j^{2k-2} +\frac{k^4}{N^2} \right) 
\end{align} 
for $N \geq k$. It follows  from (\ref{eq:HN-mu-ell-commutator}) and (\ref{eq:commu-HN-Ni}) that
\begin{align}& \left| \Big\langle   X_j , (H_N - \mu_\ell(H_N) )  \widehat n_j^k \Psi_{N,\ell} \Big\rangle \right|^2 \nn\\
&= \left| \Big\langle   X_j , \big[H_N, \widehat n_j^k \big]  \Psi_{N,\ell} \Big\rangle \right|^2 \le \left\| \big[H_N, \widehat n_j^k \big] X_j \right\|^2 \nn\\
&\le C \left\langle X_j, \left( \frac{ \dGamma(Q_jTQ_j)}{N}+ 1 \right) \left( k^2 \widehat  n_j^{2k-2} +\frac{k^4}{N^2} \right) X_j  \right \rangle \nn \\
& \le C \left\| \left( \frac{ \dGamma(Q_jTQ_j)}{N}+ 1 \right) X_j \right\| \, \left\| \left( k^2 \widehat  n_j^{2k-2} +\frac{k^4}{N^2} \right) X_j \right\|. \label{eq:X-ni-estimate-0}
\end{align}

In order to estimate the right side of (\ref{eq:X-ni-estimate-0}), let us prove the simple bound \begin{align} \label{eq:DT2<=N2} \|\dGamma(Q_i T Q_i) \widehat n_j^k \Psi_{N,\ell}  \|^2 \le C  \left(  |\mu_\ell(H_N)|^2 + N^2 \right)
\end{align}
for all $0\le k \le N$, $\ell\in \mathbb{N}$ and $i,j\in \{1,2,\dots,J\}$. First, when $k=0$, we can use (\ref{eq:N-body-kinetic-1}) and the eigenvalue equation $H_N \Psi_{N,\ell} = \mu_\ell(H_N) \Psi_{N,\ell}$ to bound
\begin{align} \label{eq:kinetic-HN-k=0}
\Big\|  \dGamma( Q_i T Q_i)  \Psi_{N,\ell}  \Big\|^2  \le 4 |\mu_\ell(H_N)|^2 + C N^2
\end{align}
for all $i\in \{1,2,\dots,J\}$. When $k\ge 1$, on the other hand, we can bound
\begin{align*} 
& \Big\|  \dGamma(Q_i  T Q_i) \widehat n_j^k \Psi_{N,\ell}  \Big\|^2 \le 4\Big\| H_N  \widehat n_j^k \Psi_{N,\ell} \Big\|^2 + C  N^2 \nn\\
&  \le 8\Big\| (H_N - \mu_\ell(H_N) ) \widehat n_j^k \Psi_{N,\ell} \Big\|^2 + 8 |\mu_\ell(N)|^2 +  C N^2 \nn\\
& = 8 \Big\| [H_N, \widehat n_j^k]\Psi_{N,\ell} \Big\|^2 + 8 |\mu_\ell(N)|^2 +  C N^2 \nn\\
& \le C \left\langle \Psi_{N,\ell}, \left( \frac{ \dGamma(Q_jTQ_j)}{N}+ 1 \right) \left( k^2 \widehat  n_j^{2k-2} +\frac{k^4}{N^2} \right) \Psi_{N,\ell} \right\rangle +  8 |\mu_\ell(N)|^2 +  C N^2 \,,
\end{align*}
where we used (\ref{eq:HN-mu-ell-commutator}) and (\ref{eq:commu-HN-Ni}) in the last step. By combining the latter estimate with $\widehat n_j \le \1$ and (\ref{eq:kinetic-HN-k=0}), using the assumption $k\leq N$, we obtain (\ref{eq:DT2<=N2}).

Finally, we turn back to the right side of (\ref{eq:X-ni-estimate-0}). Using (\ref{eq:DT2<=N2}) and the triangle inequality, we get
\begin{equation}\label{eq:X-ni-estimate-1}
\left\| \left( \frac{ \dGamma(Q_jT Q_j)}{N}+ 1 \right) X_j \right\| \le C (J+1) \left( 1 + \frac{|\mu_\ell(H_N)|}N \right) \,.
\end{equation}
Note that for any fixed $\ell$, $|\mu_\ell(H_N)|/N$ is uniformly bounded in $N$. 
Moreover, from (\ref{eq:deF}) and the fact that $\langle f_i^{\otimes k}, f_j^{\otimes k} \rangle$ decays exponentially in $k$ when $i\ne j$, it follows that
$$
\lim_{k \to \infty} \lim_{N\to \infty} \left\| k^2 \widehat  n_j^{2k-2} \widehat  n_i^{k} \Psi_{N,\ell}\right\|=0
$$
for all $i\ne j$, and hence
\begin{align} \label{eq:X-ni-estimate-2}
\lim_{k \to \infty} \lim_{N\to \infty} \left\| \left( k^2 \widehat  n_j^{2k-2} +\frac{k^4}{N^2} \right) X_j \right\|=0.
\end{align}
From (\ref{eq:X-ni-estimate-0}), (\ref{eq:X-ni-estimate-1})  and (\ref{eq:X-ni-estimate-2}), we conclude that  
\begin{align} \label{eq:split-cross-term} \lim_{k \to \infty} \lim_{N\to \infty} \left| \Big\langle  X_j , (H_N - \mu_\ell(H_N) )  \widehat n_j^k \Psi_{N,\ell} \Big\rangle \right| =0 
\end{align}
for all $j\in \{1,2,\dots,J\}$. As a consequence of (\ref{eq:split-cross-term}) and (\ref{eq:split-expansion}), we have 
\begin{align} \label{eq:mulH-split}
 \lim_{k \to \infty} \lim_{N\to \infty} & \left( \sum_{j=1}^{J}  \Big\langle   \widehat n_j^k \Psi_{N,\ell}, (H_N - \mu_\ell(H_N) )  \widehat n_j^k \Psi_{N,\ell} \Big\rangle  \right.\\
&+  \left. \Big\langle \Big( \1- \sum_{j=1}^{J} \widehat n_j^k \Big) \Psi_{N,\ell} , (H_N - \mu_\ell(H_N) ) \Big( \1- \sum_{j=1}^{J} \widehat n_j^k \Big) \Psi_{N,\ell} \Big\rangle\right) = 0 . \nn
 \end{align}
{\bf Step 3} (Localization). Now we consider each term
$\big\langle   \widehat n_j^k \Psi_{N,\ell}, H_N \widehat n_j^k \Psi_{N,\ell} \big\rangle $
separately. Since $\widehat n_j^k \Psi_{N,\ell}$ satisfies the complete condensation in $f_j$, in the sense of \eqref{eq:BEC-niPsi}, it is reasonable to use the unitary transformation $U_{N,j}: \gH^N \to \F_{+j}^{\le N}$ which is defined as in (\ref{eq:def-UN}) with $u_0$ replaced by $f_j$. We have
$$
\Big\langle   \widehat n_j^k \Psi_{N,\ell}, (H_N - Ne_{\rm H} )  \widehat n_j^k \Psi_{N,\ell} \Big\rangle  =  \Big\langle   U_{N,j} \widehat n_j^k \Psi_{N,\ell},  \bH_{N,j} U_{N,j} \widehat n_j^k \Psi_{N,\ell} \Big\rangle$$
where
$$
\bH_{N,j}:= U_{N,j}(H_N -N e_{\rm H}) U_{N,j}^\dagger.
$$
By Theorem \ref{thm:operator-bound}, we have the quadratic form bound 
\begin{align} \label{eq:bHNj-bHj-quadratic}
\bH_{N,j} - \1_{\F_{+j}^{\le N}} \bH_j \1_{\F_{+j}^{\le N}} &\ge -\frac{C}{\sqrt{N}} (\dGamma(Q_j T Q_j))^{1/2} \N_{+j} - \frac{C}{\sqrt{N}} \nn\\
& \ge -C{\sqrt{\frac{\N_{+j}}{N}}} \dGamma(Q_j T Q_j) - \frac{C}{\sqrt{N}} 
\end{align}
on $\F_{+j}^{\le N}$, where again $Q_j=\1-|f_j \rangle \langle f_j|$ and $\N_{+j}=\dGamma(Q_j)$. Heuristically, the kinetic term $ \dGamma(Q_j T Q_j) $ is of the same order as $\bH_j$. In fact, the non-degeneracy condition \eqref{eq:BEC-fj-Hj} implies that (see \cite[Theorem 1]{LewNamSerSol-13})
\begin{align} \label{eq:non-degeneracy-dGamma<=bH}
\dGamma(Q_j TQ_j)\le C(\bH_j+C)\,.
\end{align}  
Therefore, the bound (\ref{eq:bHNj-bHj-quadratic}) is useful in the region $\N_{+j} \ll N$. To proceed, we shall follow the localizing strategy in \cite{LewNamSerSol-13}. Let $f,g:\R\to [0,1]$ be Lipschitz functions such that
$$f^2+g^2=1\quad \text{and}\quad g\1_{(-\infty,1/2]}  =0 = f\1_{(1,\infty]}.$$
For any $M\in [1,N-2]$, an IMS-type estimate (see \cite[Lemma 24]{LewNamSerSol-13}) yields
\begin{align*}
\bH_{N,j} &\ge f\Big( \frac{\N_{+j}}{M}\Big) \bH_{N,j}  f\Big( \frac{\N_{+j}}{M}\Big) + g\Big( \frac{\N_{+j}}{M}\Big) \bH_{N,j}  g\Big( \frac{\N_{+j}}{M}\Big)\\
& \quad - \frac{C}{M^2} \left( \dGamma( Q_j TQ_j) +N \right).
\end{align*}
On the other hand, from (\ref{eq:bHNj-bHj-quadratic}) and (\ref{eq:non-degeneracy-dGamma<=bH}) we find that  
$$
f\Big( \frac{\N_{+j}}{M}\Big) \bH_{N,j} f\Big( \frac{\N_{+j}}{M}\Big) \ge \left( 1-C\sqrt{\frac{M}{N}}  \right) f\Big( \frac{\N_{+j}}{M}\Big) \bH_{j}f\Big( \frac{\N_{+j}}{M}\Big) - C\sqrt{\frac{M}{N}}.
$$
Thus
\begin{align} \label{eq:local-bHNj-bHj}
\bH_{N,j} &\ge \left( 1-C\sqrt{\frac{M}{N}}  \right) f\Big( \frac{\N_{+j}}{M}\Big) \bH_{j} f\Big( \frac{\N_{+j}}{M}\Big) +  g\Big( \frac{\N_{+j}}{M}\Big) \bH_{N,j}  g\Big( \frac{\N_{+j}}{M}\Big)\nn\\
&\quad - \frac{C}{M^2} \left( \dGamma( Q_j TQ_j) +N \right) - C\sqrt{\frac{M}{N}} .
\end{align}
Now we take the expectation of the quadratic inequality (\ref{eq:local-bHNj-bHj}) against $U_{N,j}\widehat n_j^k \Psi_{N,\ell}$. Note that
$$
g\Big( \frac{\N_{+j}}{M}\Big) \bH_{N,j}  g\Big( \frac{\N_{+j}}{M}\Big) \ge \mu_1(\bH_{N,j}) g^2\Big( \frac{\N_{+j}}{M}\Big)
$$  
and
\begin{align*} & \langle U_{N,j} \widehat n_j^k \Psi_{N,\ell}, \dGamma(Q_j T Q_j) U_{N,j} \widehat n_j^k \Psi_{N,\ell} \rangle \\
&=  \langle  \widehat n_j^k \Psi_{N,\ell}, \dGamma(Q_j T Q_j) \widehat n_j^k \Psi_{N,\ell} \rangle \le C  \left(  k \sqrt{\ell}+ \frac{  3^k \sqrt{\ell}}{N}+ \ell N \right)
\end{align*}
due to  inequality (\ref{eq:DT2<=N2}). Thus from (\ref{eq:local-bHNj-bHj}) we obtain
\begin{align} \label{eq:local-bHNj-bHj-1}
&\Big\langle   U_{N,j} \widehat n_j^k \Psi_{N,\ell}, \bH_{N,j}   U_{N,j} \widehat n_j^k \Psi_{N,\ell} \Big\rangle - \mu_1(\bH_{N,j}) \left\| g\Big( \frac{\N_{+j}}{M}\Big) U_{N,j}\widehat n_j^k \Psi_{N,\ell} \right\|^2 \nn\\
&\ge \left( 1-C\sqrt{\frac{M}{N}}  \right) \left\langle  f\Big( \frac{\N_{+j}}{M}\Big) U_{N,j}  \widehat n_j^k \Psi_{N,\ell}, \bH_j    f\Big( \frac{\N_{+j}}{M}\Big) U_{N,j}\widehat n_j^k \Psi_{N,\ell} \right\rangle \nn\\
&\quad -\frac{C}{M^2}  \left(  k \sqrt{\ell}+ \frac{  3^k \sqrt{\ell}}{N}+ \ell N \right)   - C\sqrt{\frac{M}{N}}.
\end{align}
{\bf Step 4} (Ground state energy) Since $H_N - \mu_1(H_N) \ge 0$, we can deduce from (\ref{eq:mulH-split}) that for all $j\in \{1,2,\dots,J\}$,
\begin{align*} 
 \limsup_{k\to \infty} \limsup_{N\to \infty} \langle   \widehat n_j^k \Psi_{N,1}, (H_N - \mu_1(H_N) )  \widehat n_j^k \Psi_{N,1} \rangle\le 0.
\end{align*}
Since $\mu_1(\bH_{N,j})=\mu_1(H_N)-Ne_{\rm H}$, the latter inequality is equivalent to
\begin{align} \label{eq:GSE-split} 
 \limsup_{k\to \infty} \limsup_{N\to \infty} \left\langle   U_{N,j} \widehat n_j^k \Psi_{N,1}, (\bH_{N,j} - \mu_1(\bH_{N,j})   U_{N,j} \widehat n_j^k \Psi_{N,1} \right\rangle   \le 0.
\end{align}

Now we choose $M=N/k^2$ and estimate both sides of (\ref{eq:local-bHNj-bHj-1}). Since 
$$
g^2\Big( \frac{\N_{+j}}{M}\Big) \le \frac{2\N_{+j}}{M} = 2k^2 \left( \1- \widehat n_j\right) 
$$
and the convergence in (\ref{eq:BEC-niPsi}) is exponentially fast in $k$, we find that
\bq \label{eq:g-vanish}
\lim_{k\to \infty}\lim_{N\to \infty} \left\| g\Big( \frac{\N_{+j}}{M}\Big) U_{N,j}\widehat n_j^k \Psi_{N,1} \right\| =0,
\eq
for all $j\in \{1,2,\dots,J\}$. Since $g^2+f^2=1$, we get
\begin{align} \label{eq:1-f-lambda}
\lim_{k\to \infty}\lim_{N\to \infty} \left\| f\Big( \frac{\N_{+j}}{M}\Big) U_{N,j}\widehat n_j^k \Psi_{N,\ell} \right\|^2 = \lim_{k\to \infty}\lim_{N\to \infty} \left\| \widehat n_j^k \Psi_{N,\ell} \right\|^2 =  \lambda_{j,\ell} 
\end{align}
for all $j\in \{1,2,\dots,J\}$. Using (\ref{eq:GSE-split}) and (\ref{eq:1-f-lambda}) we can estimate the left side of (\ref{eq:local-bHNj-bHj-1}) as
\begin{align*}
& \liminf_{k\to \infty} \liminf_{N\to \infty} \left( \Big\langle   U_{N,j} \widehat n_j^k \Psi_{N,1}, \bH_{N,j}   U_{N,j} \widehat n_j^k \Psi_{N,1} \Big\rangle \right. \nn\\
&\quad \quad\quad\quad \quad\quad\quad\quad\quad\left.- \mu_1(\bH_{N,j}) \Big\| g\Big( \frac{\N_{+j}}{M}\Big) U_{N,j}\widehat n_j^k \Psi_{N,1} \Big\|^2 \right)\nn\\
&= \liminf_{k\to \infty} \liminf_{N\to \infty} \left( \Big\langle   U_{N,j} \widehat n_j^k \Psi_{N,1}, (\bH_{N,j}-\mu_1(\bH_{N,j})   U_{N,j} \widehat n_j^k \Psi_{N,1} \Big\rangle \right. \nn\\
&\quad \quad\quad\quad \quad\quad\quad\quad\quad\left. + \mu_1(\bH_{N,j}) \Big\| f\Big( \frac{\N_{+j}}{M}\Big) U_{N,j}\widehat n_j^k \Psi_{N,1} \Big\|^2 \right)\nn\\
&\le \lambda_{j,1} \liminf_{N\to \infty} \mu_1(\bH_{N,j}) = \lambda_{j,1} \liminf_{N\to \infty}( \mu_1(H_N)-Ne_{\rm H})
\end{align*}
for all $j\in \{1,2,\dots,J\}$. Then taking the same limit on the right side of (\ref{eq:local-bHNj-bHj-1}) and using the simple bound $\bH_j\ge \mu_1(\bH_j)$, we find that
\begin{align}  \label{eq:GSE-lambda-1}
&\quad \lambda_{j,1} \liminf_{N\to \infty} ( \mu_1(H_N)-Ne_{\rm H}) \nn\\
&\ge \liminf_{k\to \infty} \liminf_{N\to \infty} \left\langle  f\Big( \frac{\N_{+j}}{M}\Big) U_{N,j}  \widehat n_j^k \Psi_{N,1}, \bH_j  f\Big( \frac{\N_{+j}}{M}\Big) U_{N,j}\widehat n_j^k \Psi_{N,1} \right\rangle \nn\\
&\ge \lambda_{j,1} \mu_1(\bH_j) \ge \lambda_{j,1} \min_{1\le i \le J}\mu_1(\bH_i)
\end{align}
for all $j\in \{1,2,\dots,J \}$. Taking the sum over $j$ and using $\sum_{j=1}^J  \lambda_{j,1} =1$ we obtain the lower bound
$$
\liminf_{N\to \infty} ( \mu_1(H_N)-Ne_{\rm H}) \ge \min_{1\le j\le J}  \mu_1(\bH_j) .
$$
Together with the upper bound \eqref{eq:mul-cond-upper-bound}, we then conclude that
\begin{align} \label{eq:final-cv-mu1}
\lim_{N\to \infty} ( \mu_1(H_N)-Ne_{\rm H}) = \min_{1\le j\le J}  \mu_1(\bH_j) .
\end{align}

From the above proof, we can also deduce easily the structure of the ground state $\Psi_{N,1}$. The convergence (\ref{eq:final-cv-mu1}) implies that we always have  equality in (\ref{eq:GSE-lambda-1}) for all $j\in \{1,2,\dots,J\}$. Consequently, for all $j\in \{1,2,\dots,J\}$ we have
\bq \label{eq:lambda-j-1=0}
\lambda_{j,1}=0\quad \text{if}\quad \mu_1(\bH_j)> \min_{1\le i \le J} \mu_1(\bH_i)
\eq
and  
$$
\lim_{k\to \infty} \lim_{N\to \infty} \left\langle  f\Big( \frac{\N_{+j}}{M}\Big) U_{N,j}  \widehat n_j^k \Psi_{N,1}, (\bH_j - \mu_1(\bH_j) )  f\Big( \frac{\N_{+j}}{M}\Big) U_{N,j}\widehat n_j^k \Psi_{N,1} \right\rangle
= 0.$$
The non-degeneracy condition (\ref{eq:BEC-fj-Hj}) implies that $\mu_1(\bH_j)<\mu_2(\bH_j)$ (see \cite[Theorem 1]{LewNamSerSol-13}), and hence we can deduce from the latter convergence that
$$
\lim_{k\to \infty} \lim_{N\to \infty} f\Big( \frac{\N_{+j}}{M}\Big) U_{N,j}  \widehat n_j^k \Psi_{N,1} = \sqrt{\lambda_{j,1}} \Phi_{j,1},
$$
for all $j\in \{1,2,\dots,J\}$, where $\Phi_{j,1}$ is the unique ground state of $\bH_j$ (up to a complex phase). Because of (\ref{eq:g-vanish}), the latter convergence is equivalent to  
$$
\lim_{k\to \infty} \lim_{N\to \infty} U_{N,j}  \widehat n_j^k \Psi_{N,1} = \sqrt{\lambda_{j,1}}  \Phi_{j,1}.
$$
In combination with (\ref{eq:cv-X-0}), we can conclude that
\begin{align} \label{eq:final-cv-N-body-ground-state} \lim_{N\to \infty} \left\| \Psi_{N,1} - \sum_{j=1}^J \sqrt{\lambda_{j,1}} U_{N,j}^\dagger\Phi_{j,1} \right\| = 0\,,
\end{align}
where $U_{N,j}^\dagger$ is extended by $0$ outside $\F_{+j}^{\le N}$, i.e. $U_{N,j}^\dagger\Phi_{j,1} :=U_{N,j}^\dagger \1_{\F_{+,j}^{\le N}}\Phi_{j,1}$.
\text{}\\\\
{\bf Step 5} (Higher eigenvalues) For every $\ell\in \mathbb{N}$, from the upper bound on $\mu_\ell(H_N)$ in (\ref{eq:mul-cond-upper-bound}) and the convergence of $\mu_1(H_N)$ in (\ref{eq:final-cv-mu1}), it follows that $H_N-\mu_\ell(H_N)$ is bounded below by a constant independent of $N$. Therefore, using (\ref{eq:cv-X-0}) we can again remove the last term of (\ref{eq:mulH-split}) and obtain 
\begin{align*}  
 \limsup_{k\to \infty} \limsup_{N\to \infty}\sum_{j=1}^{J}  \langle   \widehat n_j^k \Psi_{N,\ell}, (H_N - \mu_\ell(H_N) )  \widehat n_j^k \Psi_{N,\ell} \rangle \le 0\,.
\end{align*}
This can be rewritten as
\begin{align} \label{eq:mu-ell-HN-expectation}
&\liminf_{N\to \infty} (\mu_\ell(H_N) - N e_{\rm H}) \ge \limsup_{k\to \infty} \limsup_{N\to \infty} \sum_{j=1}^{J} \langle   \widehat n_j^k \Psi_{N,\ell}, (H_N - N e_{\rm H}) \widehat n_j^k \Psi_{N,\ell} \rangle \nn\\
&\quad \quad \quad\quad = \limsup_{k\to \infty} \limsup_{N\to \infty} \sum_{j=1}^{J} \Big\langle   U_{N,j} \widehat n_j^k \Psi_{N,\ell}, \bH_{N,j}   U_{N,j} \widehat n_j^k \Psi_{N,\ell} \Big\rangle. 
\end{align}
Using (\ref{eq:mu-ell-HN-expectation}) and (\ref{eq:local-bHNj-bHj-1})  (with the same choice $M=N/k^2$), we find the following analogue of (\ref{eq:GSE-lambda-1}) 
\begin{align}  \label{eq:GSE-lambda-ell}
&\quad \liminf_{N\to \infty} ( \mu_\ell(H_N)-Ne_{\rm H}) \nn\\
&\ge \liminf_{k\to \infty} \liminf_{N\to \infty} \sum_{j=1}^J \left\langle  f\Big( \frac{\N_{+j}}{M}\Big) U_{N,j}  \widehat n_j^k \Psi_{N,\ell}, \bH_j  f\Big( \frac{\N_{+j}}{M}\Big) U_{N,j}\widehat n_j^k \Psi_{N,\ell} \right\rangle 
\end{align}
for all $j\in \{1,2,\dots,J \}$ and $\ell\in \mathbb{N}$. We will estimate the right side of (\ref{eq:GSE-lambda-ell}) using the following min-max principle, whose proof is elementary and is left to the reader.

\begin{lemma}\label{le:min-max-lower} Let $A$ be a self-adjoint operator on a Hilbert space. Assume that $A$ is bounded from below and all min-max values $\mu_1(A), \mu_2(A), \dots$ are eigenvalues with the corresponding eigenvectors $\Phi_1,\Phi_2, \dots$. Let $\{\varphi_N\}_{N=1}^\infty$ be a sequence of normalized vectors satisfying $\langle \Phi_n, \varphi_N  \rangle \to 0$ as $N\to \infty$ for all $n\in \{1,2,\dots,L\}$. Then
$$
\liminf_{N\to \infty} \langle \varphi_N, A \varphi_N \rangle \ge \mu_{L+1}(A).
$$
Moreover, if 
$$
\lim_{N\to \infty} \langle \varphi_N, A \varphi_N \rangle = \mu_{L+1}(A)
$$
and $\mu_L(A)<\mu_{L+1}(A)=\mu_{L'}(A)<\mu_{L'+1}(A)$, then there is a subsequence of $\varphi_N$ (still denoted by $\varphi_N$ for short) such that
$$
\lim_{N\to \infty} \Big\| \varphi_N - \sum_{\ell=L+1}^{L'} \theta_{\ell} \Phi_{\ell} \Big\| =0
$$ 
for some complex numbers $\theta_{\ell}$ satisfying $\sum_{\ell=L+1}^{L'} |\theta_\ell|^2=1$.
\end{lemma}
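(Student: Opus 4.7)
The plan is to reduce both assertions to the standard variational/orthogonal-decomposition argument, using that under the stated hypotheses one can choose an orthonormal sequence $\Phi_1,\Phi_2,\dots$ of eigenvectors of $A$ with $A\Phi_n=\mu_n(A)\Phi_n$ (orthogonalising within each degenerate eigenspace when needed). The central tool is that for every $K$ the subspace $\{\Phi_1,\dots,\Phi_K\}^\perp$ is invariant under the self-adjoint operator $A$, and the min-max principle gives $\langle\psi,A\psi\rangle\ge \mu_{K+1}(A)\|\psi\|^2$ for every $\psi\in D(A)$ lying in this subspace.

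For the first inequality, I would decompose $\varphi_N=\sum_{n=1}^L\alpha_{n,N}\Phi_n+\psi_N$ with $\alpha_{n,N}=\langle\Phi_n,\varphi_N\rangle$ and $\psi_N\perp\Phi_1,\dots,\Phi_L$. The assumption $\alpha_{n,N}\to 0$ forces $\|\psi_N\|^2=1-\sum_{n=1}^L|\alpha_{n,N}|^2\to 1$. A direct computation using self-adjointness of $A$ and the eigenvalue equation gives the clean identity
$$\langle\varphi_N,A\varphi_N\rangle = \sum_{n=1}^L|\alpha_{n,N}|^2\mu_n(A)+\langle\psi_N,A\psi_N\rangle\ge \sum_{n=1}^L|\alpha_{n,N}|^2\mu_n(A)+\mu_{L+1}(A)\|\psi_N\|^2,$$
and passing to $\liminf$ yields $\liminf_N\langle\varphi_N,A\varphi_N\rangle\ge\mu_{L+1}(A)$.

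For the second part, I would refine the decomposition by peeling off the entire degenerate block: $\varphi_N=\sum_{n=1}^L\alpha_{n,N}\Phi_n+\sum_{n=L+1}^{L'}\theta_{n,N}\Phi_n+\psi_N'$ with $\psi_N'\perp\Phi_1,\dots,\Phi_{L'}$. Since $\mu_{L+1}(A)=\dots=\mu_{L'}(A)$, the middle block contributes $\mu_{L+1}(A)\sum_{n=L+1}^{L'}|\theta_{n,N}|^2$ to $\langle\varphi_N,A\varphi_N\rangle$, while the orthogonality bound for $\psi_N'$ gives $\langle\psi_N',A\psi_N'\rangle\ge\mu_{L'+1}(A)\|\psi_N'\|^2$. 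Passing to a subsequence along which $\|\psi_N'\|^2\to s$, using $\alpha_{n,N}\to 0$, the assumed limit $\langle\varphi_N,A\varphi_N\rangle\to\mu_{L+1}(A)$, and $\sum_{n=L+1}^{L'}|\theta_{n,N}|^2+\|\psi_N'\|^2\to 1$, I obtain $\mu_{L+1}(A)\ge(1-s)\mu_{L+1}(A)+s\mu_{L'+1}(A)$, so the strict gap $\mu_{L'+1}(A)>\mu_{L+1}(A)$ forces $s=0$. Hence $\|\psi_N'\|\to 0$; since $|\theta_{n,N}|\le 1$, Bolzano--Weierstrass extracts a further subsequence along which $\theta_{n,N}\to\theta_n$ with $\sum_n|\theta_n|^2=1$, and combining the three convergences yields $\|\varphi_N-\sum_{n=L+1}^{L'}\theta_n\Phi_n\|\to 0$.

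This is a textbook spectral-decomposition argument, and I expect no genuine obstacle: the only two points needing a line of care are the $A$-invariance of the orthogonal complements of the eigenvector spans (automatic from self-adjointness and the eigenvector property) and the compactness used to extract convergent subsequences of the bounded coefficient sequences $\theta_{n,N}$, which is just sequential compactness in a finite product of discs in $\mathbb{C}$.
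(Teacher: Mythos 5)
Your argument is correct, and in fact the paper offers no proof to compare against: Lemma \ref{le:min-max-lower} is stated with the remark that its proof is ``elementary and is left to the reader,'' and your orthogonal-decomposition argument is precisely the standard one that fills this gap. The two points you flag are indeed the only delicate ones: the inequality $\langle\psi,A\psi\rangle\ge\mu_{K+1}(A)\|\psi\|^2$ for $\psi\perp\Phi_1,\dots,\Phi_K$ uses that, under the hypothesis that all min-max values are eigenvalues, the total multiplicity of the spectrum of $A$ below $\mu_{K+1}(A)$ is at most $K$, so the (orthonormalized) $\Phi_1,\dots,\Phi_K$ span the spectral subspace $\mathds{1}_{(-\infty,\mu_{K+1}(A))}(A)\gH$ and $\psi$ lies in the range of $\mathds{1}_{[\mu_{K+1}(A),\infty)}(A)$; and the orthonormality of the $\Phi_n$ must be assumed (as it is in the paper's application to the $\bH_j$), since with repeated or linearly dependent eigenvectors the statement fails. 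With these remarks your proof is complete and is the intended one.
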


Recall that $\{\mu_\ell \}_{\ell=1}^\infty$ denotes the increasing sequence which is rearranged from the union (counting multiplicity) of the eigenvalues of the $\bH_j$'s. Let $L\in \mathbb{N}$ such that $\mu_L<\mu_{L+1}$. We assume that the $L$ numbers $\mu_1,\dots,\mu_L$ consist of $r_j$ eigenvalues of $\bH_j$ (counting multiplicity) with the corresponding eigenvectors $\{\Phi_{j,i}\}_{i=1}^{r_j}$, for all $j\in \{1,2,\dots,J\}$. Thus $r_j \ge 0$ and $\sum_{j=1}^L r_j=L$. We shall show that
\bq \label{eq:cv-mu-ell-induction}
 \lim_{N\to \infty}(\lambda_\ell(H_N) - Ne_{\rm H}) = \mu_\ell
 \eq
for all $\ell \in \{1,2,\dots,L\}$. Moreover, we shall also show that for the corresponding eigenfunctions $\{\Psi_{N,\ell}\}_{\ell=1}^L$ of $H_N$ 
there is a subsequence  (still denoted by $\Psi_{N,\ell}$ for short)  satisfying 
\begin{align} \label{eq:Psi-N-ell-induction}
&\big( \Psi_{N,1},\dots,\Psi_{N,L} \big)^T \\
&\quad \quad = \mathfrak{A}_L \big(U_{N,1}^\dagger\Phi_{1,1},\dots, U_{N,1}^\dagger\Phi_{1,n_1},\dots, U_{N,J}^\dagger\Phi_{J,1},\dots,U_{N,J}^\dagger\Phi_{J,r_J} \big)^T + \mathfrak{R}_{N,L} \nn
\end{align}
where $\mathfrak{A}_L$ is a $L\times L$ complex matrix independent of $N$, and $\|\mathfrak{R}_{N,L}\|_{(\gH^N)^L}\to 0$ as $N\to \infty$. Note that due to the orthonormality $\langle \Psi_{N,\ell}, \Psi_{N,\ell'}\rangle=\delta_{\ell \ell'}$ and the fact that $\lim_{N\to \infty} \langle U_{N,j}^\dagger \Phi_{j,i}, U_{N,j'}^\dagger \Phi_{j',i'} \rangle = \delta_{jj'} \delta_{ii'}$ because of (\ref{eq:upper-bound-orthogonality}), the matrix $\mathfrak{A}_L$ satisfying (\ref{eq:Psi-N-ell-induction}) must necessarily be unitary. 

We shall prove (\ref{eq:cv-mu-ell-induction}) and (\ref{eq:Psi-N-ell-induction}) using an induction argument. 
\\\\
{\it Base case}. First, we take $L$ such that $\mu_1=\mu_L<\mu_{L+1}$. Since $\mu_1(\bH_j)<\mu_2(\bH_j)$ for all $j\in \{1,2,\dots,J\}$, it follows that $L\le J$ and there are exactly $L$ numbers $i(1),i(2),\dots,i(L)\in \{1,2,\dots,J\}$ such that 
$$ \mu_1(\bH_{i(1)})=\dots=\mu_1(\bH_{i(L)})=\mu_1 < \mu_1(\bH_j)$$
for all $j\notin \{i(1),\dots,i(L)\}$. For every $\ell\in \{1,2,\dots,L\}$, from the upper bound on $\mu_\ell(H_N)$ in (\ref{eq:mul-cond-upper-bound}) and the convergence of $\mu_1(H_N)$ in (\ref{eq:final-cv-mu1}) we have
$$
\limsup_{N\to \infty} (\mu_\ell (H_N)- \mu_1(H_N)) \le \mu_\ell - \mu_1  =0.
$$
and hence (\ref{eq:cv-mu-ell-induction}) holds true for $\ell\in \{1,2,\dots,L\}$. 

Moreover, from (\ref{eq:GSE-lambda-ell}), by using the same argument applied to the ground state $\Psi_{N,1}$ in (\ref{eq:final-cv-N-body-ground-state}), we have $\lambda_{j,\ell}=0$ if $j\notin \{i(1),\dots,i(L)\}$ and we can find complex numbers $\theta_{j,\ell} \in \C$ such that $|\theta_{j,\ell}|=\sqrt{\lambda_{j,\ell}}$ and 
\begin{align} \label{eq:cv-L-first-ev} \lim_{N\to \infty} \left\| \Psi_{N,\ell} - \sum_{j=1}^{L}  \theta_{i(j),\ell} U_{N,i(j)}^\dagger\Phi_{i(j),1} \right\| = 0
\end{align}
for $\ell\in \{1,2,\dots,L\}$, where $\Phi_{j,1}$ is the unique ground state of $\bH_j$. 
We can rewrite (\ref{eq:cv-L-first-ev}) as 
$$
\left( \begin{gathered}
   \Psi_{N,1} \\
   \Psi_{N,2} \\
   \vdots  \\
   \Psi_{N,L} \\ 
\end{gathered}  \right) = \left( {\begin{array}{*{20}{c}}
   \theta_{i(1),1} & \ldots  & \theta_{i(L),1} \\ 
   \theta_{i(1),2} & \ldots  & \theta_{i(L),2} \\ 
 \vdots & & \vdots \\
  \theta_{i(1),L} & \ldots  & \theta_{i(L),L} \\ 
\end{array}} \right) 
\left( \begin{gathered}
   U_{N,i(1)}^\dagger\Phi_{i(1),1}  \\
   U_{N,i(2)}^\dagger\Phi_{i(2),1}  \\
   \vdots  \\
   U_{N,i(L)}^\dagger\Phi_{i(L),1} \\ 
\end{gathered}  \right) + \mathfrak{R}_{N,L}
$$
where $\|\mathfrak{R}_{N,L}\|_{(\gH^N)^L}\to 0$ as $N\to \infty$. Thus (\ref{eq:Psi-N-ell-induction}) holds true.
\\\\
{\it Inductive step.} Now let $L,L'\in \mathbb{N}$ be arbitrary indexes such that $\mu_L<\mu_{L+1}=\mu_{L'}<\mu_{L'+1}$. We shall prove that if (\ref{eq:cv-mu-ell-induction}) and (\ref{eq:Psi-N-ell-induction}) hold true (for $L$), then (\ref{eq:cv-mu-ell-induction}) and (\ref{eq:Psi-N-ell-induction}) also hold true with $L$ replaced by $L'$.  

Since (\ref{eq:Psi-N-ell-induction}) holds true for $L$ and $\langle \Psi_{N,\ell'}, \Psi_{N,\ell} \rangle=0$ for all $\ell' \ne \ell$, we get
\begin{align} \label{eq:higher-ev-orthogonality-full}
\lim_{N\to \infty} \langle \Psi_{N,\ell'}, U_{N,j}^\dagger \Phi_{j,m}  \rangle =0
\end{align} 
for all $\ell'>L$, $j\in \{1,2,\dots,J\}$ and $1\le m \le r_j$, where $r_j$ is the number of eigenvalues of $\bH_j$ (counting multiplicity) among $\mu_1,\dots,\mu_L$, and $\Phi_{j,m}$ are the corresponding eigenvectors. We will show that 
\begin{align} \label{eq:higher-ev-orthogonality}
\lim_{k\to \infty} \lim_{N\to \infty} \langle \widehat  n_{j'}^k \Psi_{N,\ell'}, U_{N,j}^\dagger \Phi_{j,m}  \rangle =0
\end{align} 
for all $\ell'>L$, $j,j'\in \{1,2,\dots,J\}$ and $1\le m \le r_j$. Because of (\ref{eq:cv-X-0}) and (\ref{eq:higher-ev-orthogonality-full}), we just need to prove (\ref{eq:higher-ev-orthogonality}) when $j\ne j'$. In this case, (\ref{eq:higher-ev-orthogonality}) follows from the fact that  
$$\lim_{k\to \infty} \lim_{N\to \infty}\| \widehat  n_{j'}^k  U_{N,j}^\dagger \Phi_{j,m}\|=0.$$
To verify the latter convergence, we can use the fact that $U_{N,j}^\dagger \Phi_{j,m}$ condensates completely on $f_j$, in the sense that its density matrices satisfy
$$
\lim_{N\to \infty}\gamma_{U_{N,j}^\dagger \Phi_{j,m}}^{(k)} = |f_j^{\otimes k} \rangle \langle f_j^{\otimes k}|
$$
in trace class for all $k\in \mathbb{N}$, and the fact that $\langle f_j^{\otimes k}, f_{j'}^{\otimes k} \rangle \to 0$ as $k\to \infty$ when $j\ne j'$. Thus (\ref{eq:higher-ev-orthogonality}) holds true.

Now we come back to (\ref{eq:GSE-lambda-ell}). For every $j\in \{1,2,\dots,J\}$ and $\ell'>L$, we obtain from (\ref{eq:higher-ev-orthogonality}) and \eqref{eq:g-vanish} that 
$$
\lim_{k\to \infty} \lim_{N\to \infty}\left\langle  f\Big( \frac{\N_{+j}}{M}\Big) U_{N,j}  \widehat n_j^k \Psi_{N,\ell'}, \Phi_{j,m} \right\rangle =0
$$
for all $1\le m \le r_j$. Therefore, by the min-max principle in Lemma \ref{le:min-max-lower} and (\ref{eq:1-f-lambda}),  we have
\begin{align}  \label{eq:min-max-mu-ell+1}
&\liminf_{k\to \infty} \liminf_{N\to \infty} \left\langle  f\Big( \frac{\N_{+j}}{M}\Big) U_{N,j}  \widehat n_j^k \Psi_{N,\ell}, \bH_j  f\Big( \frac{\N_{+j}}{M}\Big) U_{N,j}\widehat n_j^k \Psi_{N,\ell} \right\rangle
\nn\\
&\ge \lambda_{j,\ell'}\mu_{r_j+1}(\bH_j).
\end{align}
Note that the condition $\mu_L<\mu_{L+1}$ implies that $\mu_{r_j}(\bH_j)<\mu_{r_j+1}(\bH_j)$, and also $\mu_{r_j+1}(\bH_j) \ge \mu_{L+1}$. Therefore, taking the sum over $j\in \{1,2,\dots,J\}$ in (\ref{eq:min-max-mu-ell+1}) and using (\ref{eq:GSE-lambda-ell}), we conclude that
\begin{align}  \label{eq:GSE-lambda-ell-fn}
&\quad \liminf_{N\to \infty} ( \mu_{\ell'}(H_N)-Ne_{\rm H}) \ge \sum_{j=1}^J \lambda_{j,\ell'}\mu_{r_j+1}(\bH_j) \ge \mu_{L+1}=\mu_{\ell'}
\end{align}
for all $\ell'\in \{L+1,\dots,L'\}$. Combining this with the upper bound (\ref{eq:mul-cond-upper-bound}), we conclude that \eqref{eq:cv-mu-ell-induction} holds true with $L$ replaced by $L'$. 

Now for every $j\in \{1,2,\dots,J\}$, let us assume that $\mu_{L+1},\dots,\mu_{L'}$ consist of $r'_j-r_j$ eigenvalues of $\bH_j$. Then $r'_j \ge r_j$ and $\sum_{j=1}^J r'_j=L'$. The condition $\mu_{L+1}=\mu_{L'}<\mu_{L'+1}$ implies that $\mu_{r_j}<\mu_{r_j+1}(\bH_j)=\dots=\mu_{r'_j}(\bH_j)<\mu_{r'_j+1}(\bH_j)$. Let $\Phi_{j,r_j+1}, \dots, \Phi_{j,r'_j}$ be the eigenvectors corresponding to the eigenvalues $\mu_{r_j+1}(\bH_j)=\dots=\mu_{r'_j}(\bH_j)$ of $\bH_j$ .  Note that for all $\ell'\in \{L+1,\dots,L'\}$, we have equality in (\ref{eq:min-max-mu-ell+1}). By the min-max principle in Lemma \ref{le:min-max-lower}, there is a subsequence of $\Psi_{N,\ell'}$ (still denoted by $\Psi_{N,\ell'}$ for short) satisfying
$$
\lim_{k\to \infty}\lim_{N\to \infty} f\Big( \frac{\N_{+j}}{M}\Big) U_{N,j}  \widehat n_j^k \Psi_{N,\ell'}=   \sum_{m=r_j+1}^{r'_j} \theta_{j,m}\Phi_{j,m}
$$
for all $j\in \{1,2,\dots,J\}$, where the $\theta_{j,m}$'s are complex numbers satisfying $\sum_{m=r_j+1}^{r'_j} |\theta_{j,m}|^2 =\lambda_{j,\ell'}$. Because of \eqref{eq:g-vanish} and (\ref{eq:cv-X-0}), we obtain the following analogue of (\ref{eq:final-cv-N-body-ground-state}),
$$
\lim_{N\to \infty} \left\| \Psi_{N,\ell'} - \sum_{j=1}^J \sum_{m=r_j+1}^{r'_j} \theta_{j,m} U_{N,j}^\dagger\Phi_{j,m} \right\| =0
$$
for all $\ell'\in \{L+1,\dots,L'\}$, which is the desired statement (\ref{eq:cv-eigenfunction-PsiN}) in Theorem \ref{thm:excitation-spectrum}.

Combining the latter convergence with the system (\ref{eq:Psi-N-ell-induction}) for $L$, we obtain the desired system the system (\ref{eq:Psi-N-ell-induction}) with $L$ replaced by $L'$, namely
\begin{align} \label{eq:Psi-N-ell'-induction}
&\big( \Psi_{N,1},\dots,\Psi_{N,L'} \big)^T \\
&\quad \quad = \mathfrak{A}_{L'} \big(U_{N,1}^\dagger\Phi_{1,1},\dots, U_{N,1}^\dagger\Phi_{1,n'_1},\dots, U_{N,J}^\dagger\Phi_{J,1},\dots,U_{N,J}^\dagger\Phi_{J,r'_J} \big)^T + \mathfrak{R}_{N,L'} \nn
\end{align}
where $\mathfrak{A}_{L'}$ is a $L'\times L'$ complex matrix independent of $N$, and $\|\mathfrak{R}_{N,L'}\|_{(\gH^N)^{L'}}\to 0$ as $N\to \infty$. This completes the proof of Theorem \ref{thm:excitation-spectrum}.
\end{proof}

\bigskip

\noindent\textbf{Acknowledgment.} We thank Nicolas Rougerie for inspiring discussions. The hospitality of the Institute for Mathematical Science of the National University of Singapore is gratefully  acknowledged.


\begin{thebibliography}{20}

\bibitem{Aftalion-06}
{\sc A.~Aftalion}, {\em Vortices in {B}ose--{E}instein Condensates}, Progress
  in nonlinear differential equations and their applications. Vol 67, Springer,
  2006.

\bibitem{AscFroGraSchTro-02}
{\sc W.~Aschbacher, J.~Fr\"ohlich, G.~Graf, K.~Schnee, and M.~Troyer}, {\em
  Symmetry breaking regime in the nonlinear Hartree equation}, J. Math. Phys.,
  43 (2002), pp.~3879--3891.

\bibitem{Bogoliubov-47b}
{\sc N.~N. Bogoliubov}, {\em On the theory of superfluidity}, J. Phys. (USSR),
  11 (1947), p.~23.

\bibitem{DerNap-13}
{\sc J.~{Derezi{\'n}ski} and M.~{Napi{\'o}rkowski}}, {\em {Excitation spectrum
  of interacting bosons in the mean-field infinite-volume limit}}, Ann. Henri
  Poincar\'e (to appear). arXiv:1305.3641.

\bibitem{FanSpoVer-80}
{\sc M.~Fannes, H.~Spohn, and A.~Verbeure}, {\em Equilibrium states for mean
  field models}, J. Math. Phys., 21 (1980), pp.~355--358.

\bibitem{Fetter-09}
{\sc A.~Fetter}, {\em Rotating trapped Bose--Einstein condensates}, Rev. Mod.
  Phys., 81 (2009), pp.~647--691.

\bibitem{GreSei-13}
{\sc P.~Grech and R.~Seiringer}, {\em {The excitation spectrum for weakly
  interacting bosons in a trap}}, Commun. Math. Phys., 322 (2013),
  pp.~559--591.

\bibitem{GuoSei-13}
{\sc Y.~Guo and R.~Seiringer}, {\em On the mass concentration for
  Bose--Einstein condensates with attractive interactions}, Lett. Math. Phys., 104 (2014), pp.~141--156.

\bibitem{HudMoo-75}
{\sc R.~L. Hudson and G.~R. Moody}, {\em Locally normal symmetric states and an
  analogue of de {F}inetti's theorem}, Z. Wahrscheinlichkeitstheorie und Verw.
  Gebiete, 33 (1975/76), pp.~343--351.

\bibitem{Lewin-11}
{\sc M.~Lewin}, {\em Geometric methods for nonlinear many-body quantum
  systems}, J. Funct. Anal., 260 (2011), pp.~3535--3595.

\bibitem{LewNamRou-13}
{\sc M.~Lewin, P.~T. Nam, and N.~Rougerie}, {\em Derivation of {H}artree's
  theory for generic mean-field {B}ose gases}, Advances in Math., 254 (2014), pp.~570--621.

\bibitem{LewNamSerSol-13}
{\sc M.~Lewin, P.~T. Nam, S.~Serfaty, and J.~P. Solovej}, {\em Bogoliubov
  spectrum of interacting {B}ose gases}, Comm. Pure Appl. Math. (2014). DOI:10.1002/cpa.21519
  

\bibitem{PetRagVer-89}
{\sc D.~Petz, G.~A. Raggio, and A.~Verbeure}, {\em Asymptotics of
  {V}aradhan-type and the {G}ibbs variational principle}, Comm. Math. Phys.,
  121 (1989), pp.~271--282.

\bibitem{RagWer-89}
{\sc G.~A. Raggio and R.~F. Werner}, {\em Quantum statistical mechanics of
  general mean field systems}, Helv. Phys. Acta, 62 (1989), pp.~980--1003.

\bibitem{ReeSim2}
{\sc M.~Reed and B.~Simon}, {\em Methods of {M}odern {M}athematical {P}hysics.
  {II}. {F}ourier analysis, self-adjointness}, Academic Press, New York, 1975.

\bibitem{ReeSim4}
\leavevmode\vrule height 2pt depth -1.6pt width 23pt, {\em Methods of {M}odern
  {M}athematical {P}hysics. {IV}. {A}nalysis of operators}, Academic Press, New
  York, 1978.

\bibitem{Sei-02}
{\sc R.~Seiringer}, {\em Gross-{P}itaevskii theory of the rotating {B}ose gas},
  Commun. Math. Phys., 229 (2002), pp.~491--509.

\bibitem{Sei-03}
\leavevmode\vrule height 2pt depth -1.6pt width 23pt, {\em Ground state
  asymptotics of a dilute, rotating gas}, J. Phys. A, 36 (2003),
  pp.~9755--9778.

\bibitem{Seiringer-11}
\leavevmode\vrule height 2pt depth -1.6pt width 23pt, {\em The excitation
  spectrum for weakly interacting bosons}, Commun. Math. Phys., 306 (2011),
  pp.~565--578.

\bibitem{Stormer-69}
{\sc E.~St{\o}rmer}, {\em Symmetric states of infinite tensor products of
  {$C^{\ast} $}-algebras}, J. Functional Analysis, 3 (1969), pp.~48--68.

\end{thebibliography}
\end{document}